\newtheorem{theorem}{Theorem}
\newtheorem{proposition}{Proposition}
\newtheorem{example}{Example}
\newtheorem{lemma}{Lemma}
\begin{document}
	\begin{frontmatter}
		\title{The Interval function, Ptolemaic, distance hereditary, bridged  graphs and  axiomatic characterizations}
		\author{Manoj Changat\fnref{mcthanks}}
		\address{Department of Futures Studies,\\University of Kerala, \\Thiruvananthapuram - 695 581, India \\email: mchangat@keralauniversity.ac.in}
		\author{Lekshmi Kamal K. Sheela\fnref{lkksthanks}}
		\address{Department of Futures Studies,\\University of Kerala, \\Thiruvananthapuram - 695 581, India \\email: lekshmisanthoshgr@gmail.com}
				\author{Prasanth G. Narasimha-Shenoi\fnref{pgnthanks}}
				\address{Department of Mathematics, \\Government College Chittur,\\Palakkad - 678 104, India\\email: prasanthgns@gmail.com}
\fntext[mcthanks]{Supported by Department  of  Science  and  Technology,  SERB  (File  No.MTR/2017/000238 “Axiomatics of Betweenness in Discrete Structures”).}
\fntext[lkksthanks]{Supported by CSIR,  Government India as a CSIR-JRF.}
\fntext[pgnthanks]{Supported by Department of Science and Technology,
	SERB under their MATRICS Scheme (File No.MTR/2018/000012).}
  \begin{abstract}
In this paper we consider certain types of betweenness axioms on the interval function $I_G$ of a connected graph $G$.  
We characterize the class of graphs for which $I_G$  satisfy these axioms. The class of
graphs that we characterize include the important class of Ptolemaic graphs and some proper superclasses of Ptolemaic graphs: the distance hereditary graphs and the bridged graphs. We also provide axiomatic characterizations of the interval function of these classes of graphs using an arbitrary function known as \emph{transit function}. 
\end{abstract}
\begin{keyword}
	\texttt{transit function}\sep interval function\sep betweenness axioms\sep Ptolemaic graphs\sep distance hereditary graphs\sep bridged graphs
	\MSC[2020] 05C12 
\end{keyword}
\end{frontmatter}
%

\section{Introduction} 

Transit functions on discrete structures were introduced by Mulder
\cite{muld-08} to generalize some basic notions in discrete geometry, amongst which convexity, interval and betweenness.
A \emph{transit function} on a non-empty set $V$ is a function $R: V\times V$ to $2^{V}$  on $V$ satisfying the following three axioms:

 \begin{description}
 \item [$(t1)$] $u \in R(u,v)$, for all $u,v \in V$,
 \item [$(t2)$] $R(u,v) = R(v,u)$, for all $u,v \in V$,
 \item [$(t3)$] $R(u,u) = \{u\}$, for all $u \in V$.
 \end{description}

If $V$ is the vertex set of a graph $G$, then we say that $R$ is a transit function on $G$. Throughout this paper, we consider only finite, simple and  connected graphs. 
The underlying graph $G_{R}$ of a transit function $R$ on $V$ is the graph with vertex
set $V$, where two distinct vertices $u$ and $v$ are joined by an edge if and
only if $R(u,v) = \{u,v\}$. 

A $u,v$ - \emph{shortest path} in a connected graph $G= (V,E)$ is a $u,v$-path in $G$ containing the minimum number of edges. The length of a shortest $u,v$-path $P$ (that is, the number of edges in $P$) is the standard distance in $G$.  The interval function $I_G$ of a connected graph $G$ is the function $I_G$  defined with respect to the standard distance $d$ in $G$ as  
 
 \noindent
 $I: V\times V :\longrightarrow 2^{V}$ 
 
\begin{center}
$I_G(u,v)=\{w\in V$: $w$ lies on some shortest $u,v$ - path in $G \} =\{w\in V: d(u,w) +d(w,v) = d(u,v) \}$
\end{center}

The interval function $I_G$ is a classical example of a transit function on a graph ( we some times denote $I_G$ by $I$, if there is no confusion for the graph $G$).  It is easy to observe that the underlying graph $G_{I_G}$ of $I_G$ is isomorphic to $G$. The term interval function was coined by Mulder in \cite{mu-80}, where it is extensively studied using an axiomatic approach. \\
Nebesk\'{y} initiated a very interesting problem on the interval function $I$ of a connected graph $G=(V, E)$ during the 1990s.  The problem is the following:  `` Is it possible to give a characterization of the interval function $I_G$ of a connected graph $G$ using a set of simple axioms (first - order axioms) defined on a transit function $R$ on $V$?''
Nebesk\'{y} \cite{nebe-94,nebesky-94} proved that there exists such a characterization for the interval function $I(u,v)$ by using first - order axioms on an arbitrary  transit function $R$. In further papers that followed \cite{nebe-95,ne-08,nebesky-08,nebe-01}, Nebesk\'{y} improved the formulation and the proof of this characterization.  Also, refer Mulder and Nebesk\'{y} \cite{mune-09}.  In \cite{chfermuna-18}, the axiomatic characterization of $I_G$  is extended to that of a disconnected graph. In all these characterizations,  five essential axioms known as classical axioms are always required. These five classical axioms are $(t1)$ and $(t2)$ and three additional $(b2)$, $(b3)$, and $(b4)$ defined as follows: 

\begin{tabular}{lclclc} 
	& $(b2)$ &if $x\in R(u,v)$ and $y\in R(u,x)$, then $y\in R(u,v)$, & \\ 
	& $(b3)$ &if $x\in R(u,v)$ and $y\in R(u,x)$ then $x\in R(y,v)$, & \\
& $(b4)$ &if $x\in R(u,v)$ then $R(u,x)\cap R(x,v)= \{x\}$ & 
 \end{tabular}

The notation $x \in R(u,v)$ can be interpreted as $x$ is in between $u$ and $v$.   For example, the axiom  $(b2)$ can be interpreted as: if $x$ is between $u$ and $v$, and $y$ is between $u$ and $x$, then $y$ is between $u$ and $v$. Similarly we can describe all other axioms.  Hence we use the terminology of betweeness for an axiom on a transit function $R$. The above interpretation was the motivation for the concept of \emph{betweenness} in graphs using transit functions. It was formally introduced by Mulder in
\cite{muld-08} as those transit function that satisfy axioms $(b1)$ and $(b2)$. Here the axiom $(b1)$ is defined for every $u, v \in V$ and a transit function
$R$ as follows: \\ 
\begin{tabular}{lclclc} 
	& $(b1)$ & $x\in R(u, v), x\neq v\Rightarrow v\not\in R(u,x)$  
 \end{tabular}

The following implications can be easily verified for a transit function $R$ among axioms $(t1), (t2), (t3), (b1), (b3)$ and $(b4)$.
\begin{itemize}
\item  Axioms $(t1)$ and $(b4)$ implies axiom $(t3)$.
\item  Axioms $(t1), (t2), (t3)$ and $(b3)$ implies axiom $(b4)$ which implies axiom $(b1)$ ( that is , for a transit function $R, (b3)$ implies $(b4)$ implies $(b1)$)
\end{itemize} 

The converse of the above implications need not hold.  A transit function $R$ satisfying axioms $(b2)$ and $(b3)$ is known as a \emph{geometric transit function}.

The problem of characterizing the interval function of an arbitrary graph can be adopted for different graph classes; viz.,  characterizing the interval function of special graph classes using a set of first - order axioms on an arbitrary transit function. 
Such a problem was first attempted by Sholander in \cite{Sholander-52} with a partial proof for characterizing the interval function of trees.  Chv\'{a}tal et al. \cite{Chvatal-11} obtained the completion of this proof.  Further new characterizations of the interval function of trees and block graphs are discussed in \cite{kaMc-01}.   Axiomatic characterization of the interval function of median graphs, modular graphs, geodetic graphs, (claw, paw)-free graphs and bipartite graphs are respectively described in \cite{chferna-16, mfn,mu-80,mune-09,nebe-95}.  



In this paper, we continue the approach of characterizing the interval function of some related classes of graphs, namely, distance hereditary graphs, Ptolemaic graphs and  bridged graphs.  We fix the graph theoretical notations and terminology used in this paper.  Let $G$ be a graph and $H$ a subgraph of $G$. $H$ is called an \textit{isometric} subgraph of $G$ if the distance $d_H(u,v)$ between any pair of vertices, $u,v$ in $H$  coincides with that of the distance $d_G(u,v)$.  $H$ is called an induced subgraph if $u,v$ are vertices in $H$ such that $uv$ is an edge in $G$, then $uv$ must be an edge in $H$ also.  A graph $G$ is said to be $H$\emph{-free}, if $G$ has no induced subgraph isomorphic to $H$. Let $G_1, G_2, \ldots , G_k$ be graphs.  For a graph $G$, we say that $G$ is {\em $G_1, G_2, \ldots , G_k$-free} if $G$ has no induced subgraph isomorphic to $G_i$, $i=1, \ldots, k$.  
Chordal graph is an example of a graph $G$ which is defined with respect to an infinite number of forbidden induced subgraphs ($G$ is chordal if $G$ have no induced cycles $C_n$ with length $n$ more than three). There are several graphs that can be defined or characterized by a list of forbidden induced subgraphs or isometric subgraphs.  See the survey by Brandst\"{a}dt et al. \cite{brandstadt1999graph} and the information system \cite{isgci}, for such graph families.  
A graph $G = (V, E)$  is a \emph{bridged graph} if $G$ has no isometric cycles of length greater than $3$.  Clearly the family of bridged graphs contain the family of chordal graphs.  The graph $G$ is \emph{distance hereditary} if the distances in any connected induced subgraph $H$ of $G$ are the same as in $G$. Thus, any induced subgraph $H$ inherits the distances of $G$. The graph $G$ is a \emph{Ptolemaic graph} if $G$ is both chordal and distance hereditary. Both Ptolemaic graphs and distance hereditary graphs possess a characterization in terms of a list of forbidden induced subgraphs, while bridged graphs by definition itself possess an infinite list of forbidden isometric subgraphs. In this paper, our idea is to find suitable axioms that fail on every forbidden induced subgraph for the Ptolemaic and distance hereditary graphs, while that for the bridged graph is to find an axiom that fails on all of its forbidden isometric subgraphs, namely on all isometric cycles $C_n, n>3$.  

In addition to the geometric axioms $(b3)$ and $(b2)$, we consider the following betweenness axioms $(J0), (J2), (J2')$ and $(J3')$ for a transit function $R$ on $V$ for proving the characterizations of these classes of graphs.  
 
\begin{itemize}
	\item[$(J{0})$]:  For any pair of distinct vertices $u,v,x,y\in V$ we have $x\in R(u,y), y\in R(x,v)  \Rightarrow x\in R(u,v)$.
	\item[$(J{0'})$]:  $x\in R(u,y), y\in R(x,v), R(u,y)\cap R(x,v) \subseteq \{u,x,y,v\}  \Rightarrow x\in R(u,v)$.
\item[$(J{2})$]:   $R(u,x) = \{u,x\}, R(x,v) = \{x,v\}, R(u,v) \neq  \{u,v\} \implies  x \in  R(u,v)$.
\item[$(J{2'})$]:  $x\in R(u,y), y\in R(x,v),  R(u,x) =\{u,x\}, R(x,y)=\{x,y\}, R(y,v) = \{y,v\}, R(u,v)\neq \{u,v\}\Rightarrow x\in R(u,v)$
\item[$(J{3'})$]:  $x\in R(u,y), y\in R(x,v), R(x,y)\neq \{x,y\}, R(u,v)\neq \{u,v\}\Rightarrow x\in R(u,v)$.
\end{itemize}
From the definition of the axioms, we observe the following. The axiom $(J2)$ is a simple betweenness axiom which is
always satisfied by the interval function $I$.  The axiom $(J2')$ is a natural extension of $(J2)$.  We provide examples in the respective sections for the independence of the axioms  $(J2')$ and $(J3')$ and $(J0')$. The axioms $(J2')$ and $(J3')$ were  first considered in \cite{mcjmhm-10} and later in \cite{Changat-22}. The axiom $(J0)$ first appeared in \cite{Sholander-52} for characterizing the interval function of trees. The axiom is discussed in \cite{Changat-22} for characterizing the interval function of a Ptolemaic graph $G$.  
We may observe  that both the family of bridged graphs $\mathcal{BG}$ and distance hereditary graphs $\mathcal{DH}$  is a strict super class of the family of Ptolemaic graphs,$\mathcal{PG}$, that is, $\mathcal{PG} \subsetneq  \mathcal{DH}$ and $\mathcal{PG} \subsetneq  \mathcal{BG}$ and  $\mathcal{BG}$  and $\mathcal{DH}$  coincide only in $\mathcal{PG}$. But, $\mathcal{BG} \nsubseteq  \mathcal{DH}$ and $\mathcal{DH} \nsubseteq \mathcal{BG}$,  This relation is also reflected in the implications between the axioms $(J0), (J2'), (J3')$ and $(J0')$.  From the definitions, we have the axiom $(J0)$ implies axioms $(J2')$ and  $(J3')$, and also $(J0)$ implies $(J0')$, while the reverse implications are not true.  In other words, axioms $(J2'), (J3')$ and $(J0')$ are weaker axioms than $J(0)$ and hence graphs whose interval function satisfy axioms  $(J2')$ and $(J3')$ will be a super class of graphs whose interval function satisfies $(J0)$.  Similarly graphs whose interval function satisfies axiom  $(J0')$  will be a super class of graphs whose interval function satisfies $(J0)$. See Figure~\ref{pgdhbg} for the relationships between the family $\mathcal{PG}$, $\mathcal{DH}$ and $\mathcal{BG}$.
\begin{figure}[H]
	\centering
	\includegraphics[width=10cm,height=4cm]{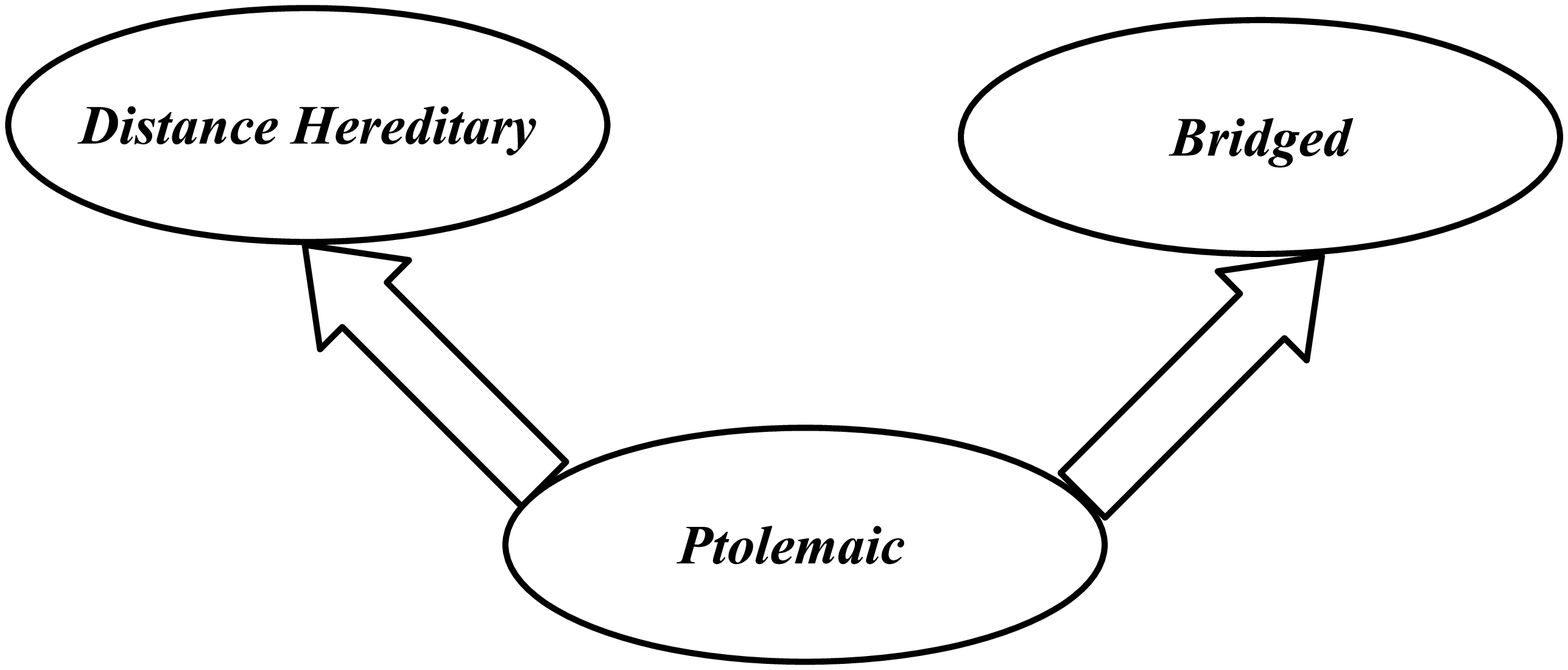}
	\caption{Relation between $\mathcal{PG}$, $\mathcal{DH}$ and $\mathcal{BG}$ }\label{pgdhbg}
\end{figure}

We organize the results as follows. In Section~\ref{DH}, we characterize the interval function of a distance hereditary graph, Ptolemaic graph in Section~\ref{Ptolemaic graphs},  bridged graph in Section~\ref{bridged} and in Section ~\ref{conclusion}, a discussion of the so called induced path transit function for the distance hereditary graphs respectively. 






\section{Interval function of Distance hereditary graphs}\label{DH}

For the axiomatic characterization of the interval function of distance hereditary graphs,  we require the axioms $(J2')$ and $(J3')$. First we show that these axioms are independent with the Examples \ref{j2'nj3'} and \ref{j3'nj2'} below.  Also it is clear from the Figures \ref{hh3fan} and \ref{cnhdiam} that the axioms $(J2')$ and $(J3')$ are independent. The interval function $I_G$ doesn't satisfy axiom $(J2')$, while $I_G$ satisfy axiom $(J3')$ for the graphs in Figure~\ref{hh3fan}.  Also $I_G$ doesn't satisfy axiom $(J3')$, while $I_G$ satisfy axiom $(J2')$ for the graphs in Figure \ref{cnhdiam}.  By an $HHD3$ - fan - free graph $G$, we mean that $G$ is free from the House graph, the Hole graph (cycles $C_n$, $n\geq 5$), the Domino  graph and the $3$-fan (See the Figures~\ref{hh3fan} and \ref{cnhdiam} for these graphs). 

 

\begin{example}[$(J2')$ but not $(J3')$ ]\label{j2'nj3'}$~$\\
Let $V=\{u,v,w,x,y,z\}$.  Define a transit function $R$ on $V$ as follows.  $R(u,x)=\{u,x\}, R(u,w)=\{u,x,z,w\}=R(x,z), R(u,z)=\{u,z\}, R(u,v)=\{u,z,v\},R(x,y)=\{x,w,y\}, R(u,y)=V=R(x,v),R(x,w)=\{x,w\},R(z,w)=\{z,w\}, R(z,y)=\{z,w,y,v\}=R(w,v), R(z,v)=\{z,v\}, R(w,y)=\{w,y\}, R(y,v)=\{y,v\}$ and $R(x,x)=\{x\}$.  We can easily see that $R$ satisfies $(J2')$.  But we can see that $x\in R(u,y), y\in R(x,v), R(x,y)\neq \{x,y\}$ and $R(u,v)\neq \{u,v\}$ but $x\notin R(u,v)$.  So $R$ does not satisfy $(J3')$.
\end{example}
\begin{example}[$(J3')$ but not $(J2')$ ]\label{j3'nj2'}$~$\\
	Let $V=\{x,y,u,v,w\}$.  Define a transit function $R$ on $V$ as follows:$R(u,x)=\{u,x\}, R(u,y)=\{u,x,y\}, R(u,v)=\{u,w,v\}, R(u,w)=\{u,w\},  R(y,w)=\{y,x,v,w\}=R(x,v), R(y,v)=\{v,y\}, R(x,y)=\{x,y\},R(x,w)=\{x,w\}, R(v,w)=\{v,w\}$ and $R(x,x)=\{x\}$.  We can see that $R$ satisfies $(J3')$. But  $x\in R(u,y), y\in R(x,v), R(u,x)=\{u,x\}, R(x,y)=\{x,y\}, R(y,v)=\{y,v\}$, but $x\notin R(u,v)$.  Hence $R$ does not satisfy $(J2')$.
\end{example}
The following results are proved in \cite{Changat-22} and \cite{mcjmhm-10}.

\begin{proposition}\cite{Changat-22}\label{(j2')}
For every graph $G$, $I_G$ satisfies the $(J2')$ axiom if and only if $G$ is house, $C_5$, $3$-fan free. 
\end{proposition}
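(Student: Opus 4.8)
The plan is to prove both directions through the contrapositive, after first rewriting the hypotheses of $(J2')$ as distance constraints on the interval function $I_G$. Suppose the premises of $(J2')$ hold for a quadruple $(u,x,y,v)$: that is, $x\in I(u,y)$, $y\in I(x,v)$, $I(u,x)=\{u,x\}$, $I(x,y)=\{x,y\}$, $I(y,v)=\{y,v\}$ and $I(u,v)\neq\{u,v\}$. The three singleton conditions say exactly that $ux$, $xy$, $yv$ are edges, so $d(u,x)=d(x,y)=d(y,v)=1$. Substituting into $x\in I(u,y)$ gives $d(u,y)=d(u,x)+d(x,y)=2$, and into $y\in I(x,v)$ gives $d(x,v)=2$; in particular $uy$ and $xv$ are non-edges. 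Since $u-x-y-v$ is a walk of length $3$ and $uv$ is a non-edge, $d(u,v)\in\{2,3\}$, while the conclusion $x\in I(u,v)$ is equivalent to $d(u,x)+d(x,v)=d(u,v)$, i.e.\ to $d(u,v)=3$. Hence $(J2')$ is violated by $(u,x,y,v)$ precisely when $d(u,v)=2$.

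Next I would extract the forbidden configuration from a violation. If $d(u,v)=2$, choose a common neighbour $w$ of $u$ and $v$. A short check shows $w\notin\{u,x,y,v\}$ and that $u,x,y,v$ are pairwise distinct: $w$ cannot be $u$ or $v$, while $w=x$ would turn the non-edge $xv$ into an edge and $w=y$ would turn the non-edge $uy$ into an edge. Thus the five vertices induce a subgraph that always contains the cycle $u-x-y-v-w-u$, in which $uy$, $xv$, $uv$ are non-edges; the only adjacencies left free are $xw$ and $yw$. A case analysis on this pair gives exactly three outcomes: neither chord present yields an induced $C_5$; exactly one present yields an induced house; both present yield the $3$-fan with apex $w$ over the path $u-x-y-v$. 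This establishes the contrapositive of the ``if'' direction: whenever $I_G$ violates $(J2')$, the graph $G$ contains an induced house, $C_5$, or $3$-fan.

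For the reverse direction I would start from an induced copy of $C_5$, the house, or the $3$-fan and produce the witnessing quadruple. In each case take $u-x-y-v$ to be a $3$-edge path whose non-consecutive pairs $uy$, $xv$, $uv$ are non-edges, and let $w$ be a common neighbour of $u$ and $v$ inside the copy (the off-path cycle vertex for $C_5$ and for the house presented as $C_5$ plus one chord, the apex for the $3$-fan). The crucial observation is that an induced subgraph inherits adjacency \emph{both} ways, so each of $uy$, $xv$, $uv$ is genuinely a non-edge of $G$ and therefore has distance at least $2$, while the explicit $2$-paths $u-x-y$, $x-y-v$, $u-w-v$ present in $G$ force each of these distances to equal $2$. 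Then all hypotheses of $(J2')$ hold and $d(u,v)=2$, so by the equivalence of the first paragraph $x\notin I(u,v)$ and $(J2')$ fails.

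The step I expect to be the genuine obstacle is this reverse direction, since an induced subgraph need not be isometric and so a priori the distances invoked in the $(J2')$ hypotheses might collapse in the ambient $G$. The remedy is the remark that only the value $2$ is ever required for the relevant non-adjacent pairs, and a non-edge spanned by a $2$-path has distance exactly $2$ no matter what the rest of $G$ looks like; thus no isometry assumption is needed. By contrast, the forward direction is routine once one sees that $\{xw,yw\}$ is the sole degree of freedom in the five-vertex configuration.
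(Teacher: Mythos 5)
Your proposal is correct, and in fact it supplies something the paper itself does not contain: Proposition~1 is only quoted from the reference [Changat-22], with no proof given here, so there is no in-paper argument to compare line by line. The closest material in the paper is the forward direction as used inside the proof of Theorem~\ref{J2J3Prime}, where the authors simply point to the labelled vertices $u,x,y,v$ in Figure~\ref{hh3fan} and assert that the house, $C_5$ and $3$-fan violate $(J2')$. Your argument recovers both directions cleanly: the reduction of the $(J2')$ premises to the distance pattern $d(u,x)=d(x,y)=d(y,v)=1$, $d(u,y)=d(x,v)=2$, $d(u,v)\in\{2,3\}$, the observation that a violation is equivalent to $d(u,v)=2$, and then the common-neighbour $w$ with exactly the two free adjacencies $xw,yw$, whose $2^2$ cases (up to symmetry, three) are precisely $C_5$, house, $3$-fan. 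This is exactly the right mechanism, and your treatment of the converse is the part worth keeping: you correctly isolate the only genuine subtlety, namely that an induced copy of one of the three graphs need not be isometric, and dispose of it by noting that every distance claim needed is of the form ``non-edge plus a common neighbour, hence distance exactly $2$,'' which is robust in the ambient graph. One cosmetic remark: the axiom $(J2')$ as stated does not assume $u,x,y,v$ distinct, so strictly you should note that in any degenerate case (e.g. $u=x$, $x=y$, $y=v$, $u=v$) the premises force the conclusion trivially, so no violation can involve coincident vertices; your distance pattern in the first paragraph already implies pairwise distinctness once the premises and the violation are both assumed, so this costs one sentence.
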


\begin{lemma}\cite{mcjmhm-10}\label{DHG}
 Let $R$ be a transit function on a non-empty finite set $V$ satisfying the axioms  $(b1), (J2), (J2')$ and $(J3')$ with underlying graph
$G_R$. Then $G_R$ is $HHD$ -free.
\end{lemma}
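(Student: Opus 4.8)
The plan is to argue by contradiction: assume $G_R$ contains an induced subgraph isomorphic to the House, to a hole $C_n$ ($n\ge 5$), or to the Domino, and derive a violation of the axioms. Throughout I translate the induced structure into statements about $R$: for an edge $ab$ of the forbidden graph $R(a,b)=\{a,b\}$, while for a non-edge $R(a,b)\neq\{a,b\}$, i.e. $R(a,b)\supsetneq\{a,b\}$ by $(t1),(t2)$. The engine is that $(J2)$ turns every length-$2$ path $a-c-b$ with $a,b$ non-adjacent into the betweenness fact $c\in R(a,b)$; then $(J2')$ and $(J3')$ propagate betweenness along longer paths; and the final contradiction always comes from $(b1)$, which forbids having both $x\in R(u,v)$ and $v\in R(u,x)$ with $x\neq v$.

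For the House (apex $1$ adjacent to $2,5$, the $4$-cycle on $2,3,4,5$, chord $\{2,5\}$) and for $C_5$ only boundedly many steps are needed. First $(J2)$ yields $2\in R(1,3)$, $5\in R(1,4)$, $3,5\in R(2,4)$ and $2,4\in R(3,5)$. Applying $(J2')$ to the path $3-4-5-1$ gives $4\in R(1,3)$, and to the path $4-3-2-1$ gives $3\in R(1,4)$; since $4\in R(1,3)$ with $4\neq 3$, axiom $(b1)$ forces $3\notin R(1,4)$, contradicting $3\in R(1,4)$. The $C_5$ case is identical in spirit. The Domino (the $6$-cycle $1-2-3-4-5-6$ with chord $\{2,5\}$) is only slightly longer: $(J2)$ seeds the obvious length-$2$ facts, $(J2')$ along $1-6-5-4$, $3-4-5-6$ and $6-1-2-3$ produces $6\in R(1,4)$, $4\in R(3,6)$ and $1\in R(3,6)$, and then a single use of $(J3')$ with $(u,x,y,v)=(1,6,4,3)$ (admissible since $R(4,6)\neq\{4,6\}$ and $R(1,3)\neq\{1,3\}$) gives the shortcut $6\in R(1,3)$. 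Now $1\in R(3,6)$ and $6\in R(3,1)$ violate $(b1)$.

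The holes $C_n$ with $n\ge 6$ are the heart of the matter and where I expect the real difficulty. Write the cycle as $v_1\cdots v_n$ (indices mod $n$, so $v_0=v_n$) and put $t=\lfloor n/2\rfloor$. The first step is a build-up claim: for every $i$ and every $2\le k\le t$ the two vertices adjacent to an endpoint lie in the interval, i.e. $v_{i+1}\in R(v_i,v_{i+k})$ and $v_{i+k-1}\in R(v_i,v_{i+k})$ along a shortest arc. This is proved by induction on $k$, the cases $k=2,3$ being $(J2)$ and $(J2')$, and the step applying $(J3')$ with $(u,x,y,v)=(v_i,v_{i+1},v_{i+k-1},v_{i+k})$, which is legal because $R(v_{i+1},v_{i+k-1})\neq\{v_{i+1},v_{i+k-1}\}$ when $k\ge 4$. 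The crucial point, and the main obstacle, is that one must confine attention to these \emph{boundary} memberships: the one-shot contradiction used for the House and Domino stalls as soon as the path has to cross the antipode, and the interior memberships such as $v_{i+2}\in R(v_i,v_{i+4})$ are not produced by this propagation, so the argument has to be organised to need only ``adjacent-to-an-endpoint'' facts. Granting the build-up, two more applications of $(J3')$ finish the job: with helper $v_{t+2}$ one obtains $v_0\in R(v_1,v_{t+1})$, and with helper $v_t$ one obtains $v_1\in R(v_0,v_{t+1})$, both antecedents being boundary memberships supplied above and the non-adjacency side conditions holding for all $n\ge 7$. Finally $(b1)$ applied with $(u,x,v)=(v_{t+1},v_0,v_1)$ turns $v_0\in R(v_{t+1},v_1)$ into $v_1\notin R(v_{t+1},v_0)$, contradicting the second shortcut. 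The short cycles $C_5$ and $C_6$ lie outside this uniform range and are handled directly ($C_5$ by $(J2')$ as above, $C_6$ by the Domino-style pattern). The delicate accounting—parity of $n$, the index bookkeeping modulo $n$, and checking that every invoked membership is genuinely of boundary type—is where the bulk of the careful work will lie.
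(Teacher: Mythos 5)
The paper contains no proof of Lemma~\ref{DHG}: it is quoted from \cite{mcjmhm-10} without proof, so there is no internal argument to measure your attempt against; it must be judged on its own merits, and on those merits it is correct. I checked the details. The House and $C_5$ contradictions (two $(J2)$ seeds, two applications of $(J2')$ along the paths $3$--$4$--$5$--$1$ and $4$--$3$--$2$--$1$, then $(b1)$) are valid. The Domino chain ($(J2)$ seeds; $(J2')$ along $1$--$6$--$5$--$4$, $3$--$4$--$5$--$6$, $6$--$1$--$2$--$3$; $(J3')$ at $(1,6,4,3)$; $(b1)$) is valid, and your claim that it also disposes of $C_6$ is right for a reason worth making explicit: none of the vertex pairs consumed by that chain is $\{2,5\}$, the unique pair on which the Domino and $C_6$ differ, so every edge/non-edge hypothesis holds in both graphs. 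For holes $C_n$, $n\ge 7$, the build-up induction is sound: the bases $k=2,3$ are exactly $(J2)$ and $(J2')$; the step at level $k$ uses only the boundary memberships at level $k-1$ (at starting indices $i$ and $i+1$), and every side condition $R(\cdot,\cdot)\neq\{\cdot,\cdot\}$ reduces to non-adjacency of cycle vertices at cycle distance at least $2$, which holds in the stated ranges. The two closing $(J3')$ applications with helpers $v_{t+2}$ and $v_t$ likewise consume only boundary facts (build-up at $i=t+2$ and $i=t+1$ with $k=n-t-1$, and at $i=0,1$ with $k=t$), and the single point where $n\ge 7$ is genuinely needed is the non-adjacency of $v_0$ and $v_{t+2}$, which fails exactly at $n=6$ --- the case you correctly divert to the Domino pattern. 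Finally, the contrast with the nearest in-paper argument is instructive: the forward direction of Theorem~\ref{J2J3Prime} is the analogous statement for $I_G$, where an induced hole need not be isometric, forcing the paper into a lengthy case analysis that extracts a smaller forbidden subgraph; in the abstract setting of Lemma~\ref{DHG} an induced subgraph of $G_R$ is all the axioms can see, so your direct membership-propagation ending in a $(b1)$ violation is the appropriate, and considerably simpler, tool.
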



Bandelt and Mulder obtained a forbidden induced subgraph
characterization of distance hereditary graphs in \cite{bamu-86}. We
quote the theorem as

\begin{theorem}\cite{bamu-86}\label{HHD3-fan}
A graph $G$ is distance hereditary if and only if $G$ is $HHD3$ - fan-free.
\end{theorem}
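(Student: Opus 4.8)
The plan is to prove the two implications separately, using throughout that being distance hereditary is inherited by induced subgraphs. For the forward implication (distance hereditary $\Rightarrow$ $HHD3$-fan-free) I would first record the heredity: if $G$ is distance hereditary and $H$ is a connected induced subgraph, then $H$ is distance hereditary, since any connected induced subgraph $K$ of $H$ is also one of $G$, whence $d_K = d_G|_{V(K)} = d_H|_{V(K)}$. Consequently $G$ contains no connected induced subgraph that itself fails distance heredity. It therefore suffices to check that each of the four forbidden graphs — the House, a Hole $C_n$ with $n \ge 5$, the Domino, and the $3$-fan — is \emph{not} distance hereditary, which a direct single-vertex deletion confirms: in $C_n$ delete an internal vertex of a geodesic; in the $3$-fan delete the apex, so that the two ends of the spanning $P_4$ jump from distance $2$ to distance $3$; and analogously for the House and the Domino. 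Since each forbidden graph is connected and non-distance-hereditary, a distance hereditary $G$ cannot contain any of them as an induced subgraph.

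For the reverse implication I would pass to the metric reformulation that $G$ is distance hereditary if and only if every induced path of $G$ is a shortest path. This equivalence is short: if every induced path is geodesic, then for a connected induced subgraph $H$ and $u,v \in V(H)$ a shortest $u,v$-path in $H$ is induced in $G$, hence geodesic in $G$, giving $d_H(u,v) = d_G(u,v)$; conversely an induced $u,v$-path is itself a connected induced subgraph, so distance heredity forces its length to equal $d_G(u,v)$. Granting this, I would argue the contrapositive: assume $G$ contains an induced path $P = x_0 x_1 \cdots x_k$ that is not a shortest path between $x_0$ and $x_k$, and choose such a $P$ with $k$ minimum. Then $k \ge 3$ (induced paths of length at most $2$ are automatically geodesic), every proper subpath of $P$ is geodesic, and in particular $d_G(x_0,x_{k-1}) = d_G(x_1,x_k) = k-1$ while $d_G(x_0,x_k) = \ell < k$.

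The heart of the argument — and the step I expect to be the main obstacle — is then a structural analysis of the subgraph induced on $V(P) \cup V(Q)$, where $Q$ is a fixed geodesic from $x_0$ to $x_k$ of length $\ell$. Using the minimality of $k$ I would control how $Q$ meets $P$ and which chords can occur between them; their union carries a cycle through $x_0$ and $x_k$, and the constraints force either a short induced cycle or a short cycle with a prescribed pattern of chords. The delicate part is the exhaustive case distinction showing that every such minimal obstruction is an induced House, Hole, Domino, or $3$-fan, ruling out all other chord configurations by invoking the geodesic property of the subpaths of $P$ and of $Q$.

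I would organize this by the value of $\ell$. The base case $\ell = 2$ is already illustrative: a common neighbour $w$ of $x_0$ and $x_k$ closes $P$ into a cycle of length $k+1$, and classifying the edges from $w$ to the interior of $P$ produces exactly a Hole (when $w$ has no interior neighbour and $k+1 \ge 5$), a $3$-fan (when $k=3$ and $w$ is adjacent to all of $x_1,x_2$), or a House (intermediate chord patterns), while any configuration placing a chord that shortens $P$ would contradict the minimality of $k$. I would then lift to $\ell \ge 3$ by an inductive reduction along $Q$, where the Domino appears as the new minimal obstruction; in each branch the final task is to verify that the located five- or six-vertex subgraph is genuinely induced and of the claimed isomorphism type.
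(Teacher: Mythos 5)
You should first be aware that the paper does not prove this statement at all: it is quoted as a known result of Bandelt and Mulder \cite{bamu-86}, so there is no in-paper proof to compare against, and your attempt has to stand on its own as a proof of the Bandelt--Mulder theorem. The parts you carry out in full are correct: heredity of the distance-hereditary property under connected induced subgraphs, the single-vertex-deletion checks showing the House, Holes, Domino and $3$-fan are not distance hereditary, and the reformulation ``distance hereditary $\Leftrightarrow$ every induced path is geodesic,'' including the setup of a minimal non-geodesic induced path $P = x_0x_1\cdots x_k$ with $k\ge 3$, all proper subpaths geodesic, and $\ell = d_G(x_0,x_k)<k$.

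The genuine gap is that the entire content of the converse lies in the exhaustive chord analysis of $V(P)\cup V(Q)$, and you have only announced a plan for it while explicitly deferring ``the delicate part.'' Worse, the fragments you do supply are incorrect, which shows the analysis was not actually run. First, the cycle closed by a common neighbour $w$ has length $k+2$, not $k+1$. Second, your base-case classification for $\ell=2$ (``exactly a Hole, a $3$-fan, or a House'') is false: take $k=4$ with $w$ adjacent to $x_0,x_2,x_4$ and to no other vertex of $P$; then $V(P)\cup\{w\}$ induces two $4$-cycles sharing the edge $wx_2$, i.e.\ a Domino, and this configuration is consistent with the minimality of $k$ (both length-$3$ subpaths of $P$ are geodesic). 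So the Domino already occurs at $\ell=2$, contradicting your claim that it first appears as a ``new minimal obstruction'' for $\ell\ge 3$. Third, since proper subpaths of $P$ are geodesic, the triangle inequality forces $\ell\ge k-2$, so $\ell$ and $k$ are tied together and an ``induction on $\ell$ along $Q$'' is not a well-posed reduction as stated; what is actually needed is a case analysis of the possible chords between the interiors of $P$ and $Q$ (no chords exist inside $P$ or inside $Q$, and one must also rule out shared interior vertices), of exactly the kind the paper itself performs in Cases 1 and 2 of the proof of Theorem~\ref{J2J3Prime}. That multi-case argument is the theorem; without it, what you have is a correct framing plus an unproved core.
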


We state a related result from \cite{Changat-22} using the axiom $(J3)$ defined for a transit function $R$ as $$'' x\in R(u,y), y\in R(x,v), x\neq y, R(u,v)\neq \{u,v\}\Rightarrow x\in R(u,v)'' $$
 Note that a $P$ graph is the graph obtained by adding a pendent edge on an induced 4-cycle, $C_4$.  It follows from the definition that axiom $(J3)$ implies both the axioms $(J2')$ and $(J3')$, but the reverse implications are not true.  We quote the result.
 
\begin{theorem}\label{hhd1}\cite{Changat-22}
For every graph $G$, $I_G$ satisfies the $(J3)$ axiom if and only if $G$ is $HHP3$ - fan - free graph. \end{theorem}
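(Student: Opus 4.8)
The backbone of the plan is to rewrite $(J3)$ metrically. Writing $d$ for the distance in $G$, the hypotheses $x\in I(u,y)$ and $y\in I(x,v)$ read $d(u,x)+d(x,y)=d(u,y)$ and $d(x,y)+d(y,v)=d(x,v)$; the side condition $I(u,v)\neq\{u,v\}$ is equivalent to $d(u,v)\ge 2$; and the negated conclusion $x\notin I(u,v)$ means $d(u,x)+d(x,v)>d(u,v)$. I would prove both implications by contrapositive, working throughout with these distance identities together with $x\neq y$.

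For necessity ($I$ satisfies $(J3)\Rightarrow G$ is $HHP3$-fan-free) I would first exploit the implication $(J3)\Rightarrow(J2')$ recorded just above the statement: since $(J3)$ forces $(J2')$, Proposition~\ref{(j2')} at once removes the house, $C_5$ and $3$-fan, so only the holes $C_n$ $(n\ge 6)$ and the $P$-graph remain. For each I would exhibit an explicit violating quadruple. On the $P$-graph with $4$-cycle $a\text{-}b\text{-}c\text{-}d\text{-}a$ and pendant edge $p\text{-}a$, the tuple $(u,x,y,v)=(p,b,c,d)$ satisfies $b\in I(p,c)$, $c\in I(b,d)$, $b\neq c$ and $I(p,d)=\{p,a,d\}\neq\{p,d\}$, yet $b\notin I(p,d)$. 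On a hole $v_0v_1\cdots v_{n-1}$ I would place $u,x,y$ consecutively on one arc (say $x=v_0$, $u=v_1$, $y=v_{n-1}$) and take $v$ to be the vertex (nearly) antipodal to $x$, so that a shortest $u\text{-}v$ path runs along the complementary arc and misses $x$ while $y$ still lies on a shortest $x\text{-}v$ path; one checks the four relations above directly.

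The delicate point, and the step I expect to cost the most care, is that these computations live inside the configuration whereas $(J3)$ is tested against the interval function of all of $G$: an induced hole or induced $P$ need not be isometric, so chords or extra neighbours in $G$ could shorten the distances used and spoil the witness. To repair this I would, having already established that $G$ is house-, $C_5$- and $3$-fan-free, pass to a \emph{shortest} induced cycle of length $\ge 6$ (respectively to an induced $P$ for which the relevant distances are realised in $G$) and argue that under these exclusions such a minimal hole must be isometric, via the standard device of routing a hypothetical chord or shortcut into a strictly shorter induced cycle and checking that any such shortcut manufactures a house, a $3$-fan, or a shorter hole. Once isometry is secured, the internal distances agree with those in $G$ and the explicit witnesses go through.

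For sufficiency ($G$ is $HHP3$-fan-free $\Rightarrow I$ satisfies $(J3)$) I would argue by contradiction from a violating quadruple $(u,x,y,v)$, chosen to minimise $d(u,y)+d(x,v)$ (or, failing that, $d(u,v)$). The first step is a localisation: using minimality I would contract the witnesses along the relevant geodesics so that consecutive witnesses sit as close as the identities permit, reducing to a short configuration while preserving the strict inequality $d(u,x)+d(x,v)>d(u,v)$. I would then fix a shortest $u\text{-}v$ path $W$; because $x\notin I(u,v)$, $W$ avoids $x$, so $W$ together with a shortest $u\text{-}x\text{-}y\text{-}v$ detour (whose length exceeds $d(u,v)$) bounds a closed walk from which I extract a cycle $C$ of length at least $5$. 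The endgame is to show that $C$, examined together with its chords and the geodesic constraints, must realise one of the four forbidden subgraphs: if $C$ is chordless it is a hole, and if it carries chords the betweenness relations pin the local pattern down to a house, a $3$-fan, or a $P$-graph, the desired contradiction. I expect making the localisation rigorous without losing the defining inequality, and the final chord case analysis that identifies exactly these four configurations, to be the principal obstacles; the remainder is routine manipulation of the triangle inequality.
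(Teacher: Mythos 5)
Your plan is, in its skeleton, the same route the paper takes. Note first that the paper does not actually prove this statement — it is quoted from \cite{Changat-22}; the in-paper analogue is the proof of Theorem~\ref{J2J3Prime}, which the authors explicitly say follows the same lines with $(J2'),(J3')$ in place of $(J3)$. Your structure matches that proof: dispatch house, $C_5$ and $3$-fan via $(J3)\Rightarrow(J2')$ together with Proposition~\ref{(j2')}; exhibit explicit metric witnesses on isometric holes and on the $P$-graph (both of your witnesses check out: on $P$ the tuple $(p,b,c,d)$ violates $(J3)$ provided $d_G(p,c)=3$, and your hole witness is essentially the configuration of Figure~\ref{cnhdiam}); repair non-isometry by a minimal-counterexample reduction; and, for sufficiency, glue the $u,y$-geodesic through $x$ and the $x,v$-geodesic through $y$ with a $u,v$-geodesic avoiding $x$, then run a chord analysis on the resulting cycle — exactly the paper's choice of the vertices $a$, $b$, the chord replacement, and its Cases 1 and 2.

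Two places where your sketch falls short of a proof. First, your claim that under the house/$C_5$/$3$-fan exclusions a minimal hole \emph{must be isometric}, because any shortcut manufactures ``a house, a $3$-fan, or a shorter hole,'' omits an outcome that genuinely occurs in the reduction: the induced \emph{domino} (see the case $\ell=k+2$ in the proof of Theorem~\ref{J2J3Prime}). In your setting this is repairable rather than fatal — the domino contains an induced $P$ (a $4$-cycle plus a pendant corner vertex), and the $P$-witness then applies once $d_G(p,c)=3$ is verified, which in turn needs the same house/$C_5$/$3$-fan analysis you deferred for induced $P$'s — but the argument must be restated as ``either the minimal hole is isometric or some forbidden configuration admitting a valid witness appears,'' not as an isometry claim. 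Second, essentially all of the mathematical weight in both directions lies in the two case analyses you defer (the non-isometric reduction and the chord analysis on the glued cycle); in the paper's analogous proof these occupy nearly the entire argument, and your ``localisation/contraction'' preamble in the sufficiency direction is vaguer than what is needed there, where one simply fixes $a$ as the last vertex of the detour before $x$ lying on the chosen $u,v$-geodesic, $b$ as the first after $y$, and replaces a maximal chord. So: right approach, correct witnesses, but the decisive combinatorics remain to be carried out.
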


It may be observed that a $P$ graph is a distance hereditary graph and hence the class of $HHP3$ - fan - free graphs is a proper subclass of the class of $HHD3$ - fan - free graphs (distance hereditary graphs). The proof of the next theorem characterizing the class of distance hereditary graphs follows the same lines of ideas as in the proof of Theorem~\ref{hhd1} with modifications since the axioms $(J2')$ and $(J3')$ are weaker axioms than the axiom $(J3)$.  


\begin{theorem}\label{J2J3Prime}
	Let $G$ be a connected graph. The interval function $I_G$ satisfy the axioms $(J2')$  and $(J3')$  if and only if $G$ is a distance hereditary graph.
\end{theorem}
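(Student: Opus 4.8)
The plan is to prove both directions by exploiting the forbidden induced subgraph characterization of distance hereditary graphs given in Theorem~\ref{HHD3-fan}, namely that $G$ is distance hereditary if and only if $G$ is $HHD3$-fan-free.

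For the forward direction, suppose $I_G$ satisfies $(J2')$ and $(J3')$. By Proposition~\ref{(j2')}, the axiom $(J2')$ already forces $G$ to be house, $C_5$, $3$-fan free. It therefore remains to rule out the two remaining forbidden subgraphs for distance hereditarity: the \emph{domino} and the longer holes $C_n$ with $n \geq 6$. The strategy here is to show that $(J3')$ fails on each of these. Concretely, I would suppose toward a contradiction that $G$ contains an induced domino or an induced/isometric $C_n$ with $n\geq 6$, locate inside it a configuration of four vertices $u,x,y,v$ with $x\in I_G(u,y)$, $y\in I_G(x,v)$, $I_G(x,y)\neq\{x,y\}$ (i.e.\ $x$ and $y$ are at distance $\geq 2$), and $I_G(u,v)\neq\{u,v\}$, yet $x\notin I_G(u,v)$. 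The suggestive comment in the excerpt — that Figure~\ref{cnhdiam} depicts exactly the graphs on which $I_G$ satisfies $(J2')$ but fails $(J3')$ — tells me the domino and the large holes are precisely the witnesses I need, so this amounts to exhibiting the right four vertices on each and checking the distance conditions. Combining this with Proposition~\ref{(j2')} eliminates all of House, Hole, Domino and $3$-fan, so $G$ is $HHD3$-fan-free, hence distance hereditary by Theorem~\ref{HHD3-fan}.

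For the converse, suppose $G$ is distance hereditary. I must verify that the genuine interval function $I_G$ satisfies both axioms. Axiom $(J2')$ is immediate from Proposition~\ref{(j2')} together with Theorem~\ref{HHD3-fan}, since a distance hereditary graph is in particular house, $C_5$, $3$-fan free. For $(J3')$, take $u,x,y,v$ with $x\in I_G(u,y)$, $y\in I_G(x,v)$, $d(x,y)\geq 2$, and $d(u,v)\geq 2$; I want $x\in I_G(u,v)$, i.e.\ $d(u,x)+d(x,v)=d(u,v)$. The natural tool is the hereditary behaviour of distances: the hypotheses give $d(u,x)+d(x,y)=d(u,y)$ and $d(x,y)+d(y,v)=d(x,v)$, and the plan is to propagate these additivities along a shortest $u$--$v$ path using the distance-hereditary property (equivalently, the characterization that in such graphs any two vertices are joined by a path on which the standard metric behaves tree-like, ruling out the "detour" configurations that the house, hole, domino and fan encode). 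I expect to lean on Lemma~\ref{DHG} here as well, which guarantees that the underlying graph of a transit function satisfying $(b1),(J2),(J2'),(J3')$ is $HHD$-free, to keep the induced-subgraph analysis aligned between the abstract transit-function side and the interval-function side.

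The main obstacle will be the converse's verification of $(J3')$, specifically controlling the distances $d(u,x)$, $d(x,v)$ and $d(u,v)$ simultaneously rather than just locally: the two additive relations at $x$ and at $y$ do not immediately combine into additivity between $u$ and $v$ through $x$, and a careless argument would only yield an inequality. Following the excerpt's guidance that the proof "follows the same lines as Theorem~\ref{hhd1} with modifications since $(J2')$ and $(J3')$ are weaker than $(J3)$," I would mirror the case analysis used there for the $(J3)$ axiom, but split off the case $d(x,y)=1$ (which $(J3)$ handled uniformly but which $(J3')$ now excludes by hypothesis) and handle it, if it arises in the forbidden-subgraph search, by appealing instead to $(J2')$. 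The delicate bookkeeping is thus to ensure every four-vertex configuration that could violate $(J3')$ in a non-distance-hereditary graph genuinely forces an induced house, hole, domino or $3$-fan, and conversely that in a distance hereditary graph no such violating configuration survives the additivity of the metric.
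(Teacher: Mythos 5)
Your overall strategy --- both directions via the $HHD3$-fan forbidden-subgraph characterization (Theorem~\ref{HHD3-fan}), with Proposition~\ref{(j2')} disposing of the house, $C_5$ and $3$-fan, and $(J3')$ handling the domino and the long holes --- is exactly the paper's, but your forward direction has a genuine gap: you treat an induced domino or an induced hole $C_n$, $n\geq 6$, as if the witness vertices $u,x,y,v$ of Figure~\ref{cnhdiam} could simply be ``checked against the distance conditions'' of $(J3')$. Those conditions ($x\in I_G(u,y)$, $y\in I_G(x,v)$, $x\notin I_G(u,v)$) are statements about distances in $G$, not in the subgraph, so the witness works only when the domino or hole is \emph{isometric} in $G$; an induced copy need not be isometric, and in that case no violating quadruple need exist inside that copy at all. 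The paper spends most of its forward-direction proof on precisely this point: for a non-isometric domino it extracts a vertex $z$ yielding an induced house, $C_5$ or $3$-fan (caught by $(J2')$); for a non-isometric hole it takes $n$ minimal, compares a $u,v$-geodesic $P$ with the two arcs $Q,R$ of the cycle, and through a careful adjacency analysis produces a shorter hole, a house, a domino, or a $3$-fan, the only surviving configuration being the $C_6$ with a path joining two diametrical vertices, which again violates $(J3')$. Without this reduction your argument only shows that $G$ has no \emph{isometric} domino or hole, which is strictly weaker than $HHD3$-fan-freeness.

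Your converse identifies the right template but stops short of a proof. The direct ``propagate the additivity'' idea you open with does not work as stated --- as you yourself concede, the relations $d(u,x)+d(x,y)=d(u,y)$ and $d(x,y)+d(y,v)=d(x,v)$ do not combine into $d(u,x)+d(x,v)=d(u,v)$ --- and the paper accordingly argues by contrapositive: from a violating quadruple $u,x,y,v$ it concatenates a $u,y$-geodesic $P$ through $x$ with an $x,v$-geodesic $Q$ through $y$ (first proving the concatenation is a path), fixes a $u,v$-geodesic avoiding $x$, locates the vertices $a$ and $b$ where the concatenation last/first meets that geodesic, replaces chords to obtain an induced $a,b$-path $c_0,\ldots,c_t$, and then runs a two-case adjacency analysis ($a_1c_1\notin E(G)$ versus $a_1c_1\in E(G)$) to exhibit an induced house, hole, domino or $3$-fan. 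Saying you would ``mirror the case analysis of Theorem~\ref{hhd1}'' points at the correct method --- the paper itself says as much --- but the geodesic construction, the choice of $a$ and $b$, and that case analysis are the actual content of this direction, and your proposal does not supply any of it.
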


\begin{proof}
	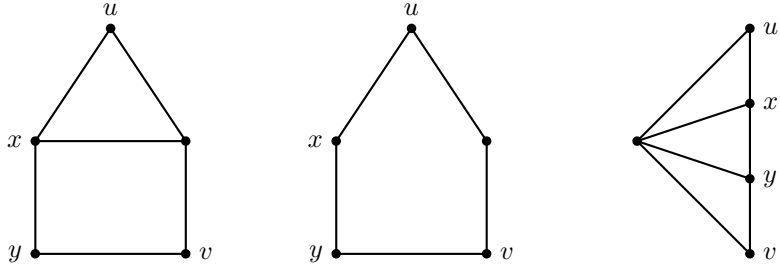
\begin{figure}[H]
		\centering
		\psscalebox{1}
		{	
			{
				\begin{pspicture}(-4,0)(9,3)
				\psdots(-2,0)(-2,1.5)(-1,3)(0,1.5)(0,0)
				\psline(-2,0)(-2,1.5)(-1,3)(0,1.5)(0,0)(-2,0)
				\psline(-2,1.5)(0,1.5)
				\uput[l](-2,0){$y$}
				\uput[l](-2,1.5){$x$}
				\uput[u](-1,3){$u$}
				\uput[r](0,0){$v$}
				\psdots(2,0)(2,1.5)(3,3)(4,1.5)(4,0)
				\psline(2,0)(2,1.5)(3,3)(4,1.5)(4,0)(2,0)
				\uput[l](2,0){$y$}
				\uput[l](2,1.5){$x$}
				 \uput[u](3,3){$u$}
				\uput[r](4,0){$v$}
				\psdots(6,1.5)(7.5,0)(7.5,1)(7.5,2)(7.5,3)
				\psline(7.5,0)(7.5,3)
				\psline(6,1.5)(7.5,0)
				\psline(6,1.5)(7.5,1)
				\psline(6,1.5)(7.5,2)
				\psline(6,1.5)(7.5,3)
				\uput[r](7.5,0){$v$}
				\uput[r](7.5,1){$y$}
				\uput[r](7.5,2){$x$}
				\uput[r](7.5,3){$u$}
				%
				\end{pspicture}
			}
		}
		\caption{House, $C_5$, $3$ - fan}\label{hh3fan}
	\end{figure}
	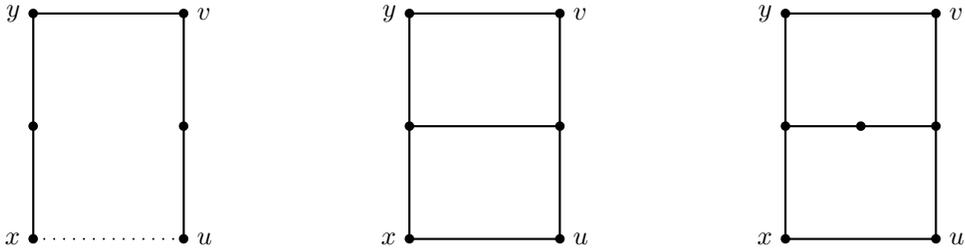
\begin{figure}[H]
		\centering
		\psscalebox{1}
		{	
			{
				\begin{pspicture}(-2,0)(11,3)
			\psdots(-1,0)(-1,1.5)(-1,3)(1,3)(1,1.5)(1,0)
				\psline(-1,0)(-1,3)(1,3)(1,0)
				\psline[linestyle=dotted](-1,0)(1,0)
					\uput[l](-1,0){$x$}
					\uput[l](-1,3){$y$}
					\uput[r](1,0){$u$}
					\uput[r](1,3){$v$}
						\psdots(4,0)(4,1.5)(4,3)(6,3)(6,1.5)(6,0)
						\psline(4,0)(4,3)(6,3)(6,0)(4,0)
						\psline(6,1.5)(4,1.5)
						\uput[l](4,0){$x$}
						\uput[r](6,0){$u$}
						\uput[l](4,3){$y$}
						\uput[r](6,3){$v$}
				\psdots(9,0)(9,1.5)(9,3)(11,3)(11,1.5)(11,0)(10,1.5)
			\psline(9,0)(9,3)(11,3)(11,0)(9,0)
			\psline(9,1.5)(11,1.5)
			\uput[l](9,0){$x$}
			\uput[r](11,0){$u$}
			\uput[l](9,3){$y$}
			\uput[r](11,3){$v$}
				\end{pspicture}
			}
		}
	\caption{$C_n,n\ge 6$, Domino, $C_6$ cycle with a path joining two  diametrical vertices}\label{cnhdiam}
\end{figure}
We use the fact from Theorem~\ref{HHD3-fan} that distance hereditary graphs are precisely $HHD3$ - fan - free graphs for the proof. 
Suppose that the interval function $I_G$ of $G$ satisfy the axioms $(J2')$ and $(J3')$. To prove that $G$ is $HHD3$ - fan - free, assume the contrary that $G$ contains a house, a hole, or a domino  or a $3$-fan as an induced subgraph.  A graph with an induced house or $3$ - fan or a $C_5$ 
doesn't satisfy $(J2^\prime)$ ( The vertices $u,x,y,v$ in Figure~\ref{hh3fan} doesn't satisfy the axiom $(J2')$).
For an isometric hole $C_n$, $n\geq 6$, we choose vertices $u,x,y,v$ as shown in Figure
\ref{cnhdiam}, to prove that $(J3')$ is violated. If $G$ has a domino $D$, 
which is an isometric subgraph of $G$ with vertices as shown in Figure~\ref{cnhdiam}, then $x\notin I_G(u,v)$. If $D$ is not isometric, then there is
a vertex $z$ adjacent to both $u$ and $y$ or $v$ and $x$. In this case, the graph
induced by $u,a,x,y,z$ is either a $C_5$  or a house or a $3$-fan. \\
Let $C_n, n \geq 6$ be a hole in $G$ that is not isometric and assume that $n$ is minimum. Clearly $n\geq  6$ and there exist $u, v \in V(C_n)$
such that $d_G(u, v) < d_{C_n} (u, v)$. Let $P = p_0,\ldots,p_k$ be a $u, v$-geodesic; we may choose $P$ such that $k$ is minimum. Let $Q$ and
$R$ be the $u, v$-paths on $C_n$ where $Q = q_0,\ldots, q_\ell$, $R = r_0,\ldots,r_m$ and $u = p_0 = q_0 = r_0$ and $v = p_k = q_\ell = r_m$. Clearly
$2\leq k < \ell$  and $k < m$ and we may assume without loss of generality that $\ell\leq m$. Moreover, we can choose $P$ such that
the cycle $C$ induced by $V(P)\cup V(Q)$ has minimum length. In particular, by the choice of $C$, $p_{k-1}$ is not adjacent to any of
$\{q_{k+1},\ldots, q_{\ell-1}\}$ whenever $\ell > k + 1$. 
By minimality of $k$, $p_i$, $i\in \{1,\ldots, k - 2\}$ can be
adjacent to $q_j$ (or $r_j$) only if $i = j$ or $j = i + 1$.
If $\ell\geq k + 3$, then $\{p_{k-1}, p_k, q_k, q_{k+1}, \ldots, q_{\ell-1} \}$ form (together with possibly some additional vertices of $P$ or $Q$) an
induced hole shorter than $n$, which is not possible. If $\ell = k + 2$, then $p_{k-1}$ and $q_k$ must be adjacent; otherwise we have a
shorter hole than $n$ on vertices $\{p_{k-1}, p_k, q_k, q_{k+1}\}$ together with $p_{k-2}$ or $q_{k-1}$ (and possibly some other vertices of $P$ or $Q$). But then
$A = \{p_{k-1}, p_k, q_k, q_{k+1} \}$ form a $4$-cycle and we have an induced domino  on $A\cup  q_{k-1}, p_{k-2}$ when $p_{k-1}q_{k-1}, p_{k-2}q_{k-2}\notin E(G)$ and an induced house on the same vertices when $p_{k-1}q_{k-1}\in  E(G)$.
Let now $\ell = k + 1$. First note that $p_1$ must be adjacent to at least one of $\{q_1, q_2\}$ and of $\{r_1, r_2\}$ by the minimality of $n$, since there are no isometric holes. Let first $k = 2$. If either $q_1$ or $q_2$ is not adjacent to $p_1$,
we have an induced house as a subgraph. Otherwise we have an induced $3$ - fan  on $\{p_1, q_0, q_1, q_2, q_3\}$.
Let now $k > 2$. By the above, $p_2$ is not adjacent to $q_1$ or to $r_1$ and $p_1$ is not adjacent to $q_3$ or $r_3$. Suppose that at least one of
$q_2$ or $r_2$, say $q_2$, is adjacent to $p_1$. If both $q_1$ and $r_1$ are adjacent to $p_1$, we have an induced $3$ - fan on $\{p_1, r_1, q_0, q_1, q_2\}$. If $q_1$ is
adjacent to $p_1$ but $ r_1$ is not, we have an induced house on $\{p_1, q_0, q_1, r_1, r_2\}$. If both $q_1$ and $r_1$ are not adjacent to $p_1$ , $r_2$  adjacent to  $p_1$ and $q_2$ adjacent to $p_1$, we have an induced domino on $\{p_1, q_0, q_1, q_2, r_1,r_2\}$. Hence $q_2p_1\notin  E(G)$ but $q_1p_1 \in E(G)$ and by a similar argument $r_2p_1\notin  E(G)$ but $r_1p_1 \in E(G)$. Since $q_2p_1, q_1p_2\notin  E(G)$, $p_2$ and $q_2$ must be adjacent since there are no isometric holes or by the minimality of $n$. This yields an induced house on $\{p_2, p_1, q_0, q_1, q_2\}$. Finally, If the induced $C_n$ is not isometric with $n=6$, the only case left is the one in Figure~\ref{cnhdiam}, and by choosing the vertices $u,x,y,v$ as in the figure, it follows that the axiom $(J3')$ is violated. Therefore, when $G$ has an induced house, hole, domino or $3$ - fan, either the axiom $(J2')$ or $(J3')$ is violated. 

Conversely  assume that axiom  $(J2')$ or $(J3')$ is not satisfied by the interval function $I_G$ of $G$. It is already known from Proposition~\ref {(j2')} that if axiom $(J2')$ is not satisfied, then $G$ has an induced $C_5$, House or $3$ - fan.  Now suppose axiom $(J3')$  is not
satisfied. Then there exists distinct vertices $u,x,y,v$ in $V$ such that
$x\in I_G(u,y), y\in I_G(x,v)$ , $I_G(x,y) \neq \{x,y\}$, $I_G(u,v)\neq \{u,v\}$ and $x\notin I_G(u,v)$.
Let $P$ be a $u,y$-geodesic containing $x$ and $Q$ be a $x,v$ - geodesic containing $y$.
We claim that $$u\rightarrow P\rightarrow x\rightarrow P \rightarrow
y\rightarrow Q\rightarrow v$$ is a $u,v$-path.

Since $d_P(x,y)=d_Q(x,y)$ we see that $ x\rightarrow P \rightarrow
y\rightarrow Q\rightarrow v$ is a $x,v$ geodesic. Therefore
$(x\rightarrow P\rightarrow y)\cap (y\rightarrow Q\rightarrow
v)=\{y\}$, for otherwise we may find a shorter path from $x$ to $v$.

Now we claim that $(u\rightarrow P\rightarrow x)\cap (y\rightarrow
Q\rightarrow v)=\emptyset$. Assume on the contrary, that $w\in
(u\rightarrow P\rightarrow x)\cap (y\rightarrow Q\rightarrow v)$,
and let $w$ be the last vertex of the intersection (when going from
$u$ to $x$ along $u\rightarrow P\rightarrow x$ path). Then
$d_P(w,y)=d_Q(y,w)$, and since $w\neq y$ we find that $d_P(w, x)<
d_Q(w, y)$.  Hence we have that the path $x\rightarrow P\rightarrow
w\rightarrow Q\rightarrow v$ is shorter  than $x\rightarrow
Q\rightarrow y\rightarrow w \rightarrow  Q\rightarrow v$, a
contradiction.

Hence $u\rightarrow P\rightarrow x\rightarrow P \rightarrow
y\rightarrow Q\rightarrow v$ is a $u,v$-path.  Since $x\notin
I_G(u,v)$, $u\rightarrow P\rightarrow x\rightarrow P\rightarrow
y\rightarrow Q\rightarrow v$ is not an $u,v$ geodesic. If $R$ is any
$u,v$-geodesic, then $x\notin V(R)$. Fix an $u,v$-geodesic $R$. Let
$a$ be the last vertex on $P$ before $x$ that is on $R$ and $b$ be
the first vertex on $Q$ after $y$ that is on $R$. Note that such
vertices always exists, since $u\in P\cap R$ and $v\in Q\cap R$. On
the other hand, note that $b$ can be equal to $y$, but $a\neq x$.
Label vertices of the path $a\rightarrow P\rightarrow y\rightarrow
Q\rightarrow b$ by $a=b_0,b_1,\ldots b_{\ell}=b$.  Label vertices of
the $a,b$-subpath of $R$ as $a=a_0,a_1, \ldots ,a_k=b$.  Clearly
$\ell \ge 4$ and $k\ge 2$ and $k<\ell $.  Path $b_0b_1\ldots
b_{\ell}$ is not necessarily an induced path. If not, we choose
among all chords $b_ib_j$ the one with maximal $j-i$ and replace the
part $b_i\ldots b_j$ by this chord. Vertices of this new path are
denoted by $a=c_0,c_1,\ldots ,c_t=a_k=b$, where still $t>k\geq 2$ by
the choice of $a$ and $b$. But $a_0a_1\ldots a_k$ is an induced
path, since it is a shortest path. Note that $c_1$ is not adjacent
to $a_2,\ldots ,a_k$ by the choice of $a$. Hence $c_1$ can be
adjacent only to $a_1$. Similarly $c_h$, for $h\geq 2$, cannot be
adjacent to $a_{h+1},\ldots ,a_k$. We consider the following two
cases.\\

\noindent 
\textbf{Case 1:} $a_1c_1\notin E(G)$.\\
If also $a_1c_2\notin E(G)$, we have an induced cycle of length $\ge
6$. So let $a_1c_2\in E(G)$. Also $a_2c_2\notin E(G)$ and
$a_1c_3\notin E(G)$, otherwise we have a house on vertices
$c_0,c_1,c_2,a_1$, and $a_2$ or $c_3$ respectively. This implies
that $c_3a_2$ is an edge and we have an induced domino
($c_1,c_2,c_3,a_2,a_1,a_0$), otherwise we have an induced cycle of length
$\geq 6$. \\

\noindent \textbf{Case 2:} $a_1c_1\in E(G)$.\\
If $c_3=a_2$, we have
a house if $a_1c_2\notin E(G)$ or a $3$ - fan on vertices
$a_0,a_1,c_1,c_2$, and $c_3$. Hence $c_3\neq a_2$ and also
$c_3a_1\notin E(G)$. We get an induced cycle of length $\ge 5$, if
$c_2$ is not adjacent to at least one of $a_1$ or $a_2$. If first
$c_2a_2\in E(G) $ and $c_2a_1\notin E(G)$, we get a house on
vertices $a_0,a_1,c_1,a_2$, and $c_2$. Let now $c_2a_2\notin E(G) $
and $c_2a_1\in E(G)$. Since $c_3\neq a_2$, there exists $c_4$. If
$c_4a_1\in E(G)$, we get a house on vertices $a_1,c_1,c_2,c_3$, and
$c_4$. Also $c_3a_2\notin E(G)$, since otherwise we get a house on
vertices $a_1,c_1,c_2,c_3$, and $a_2$. But now we have an induced
path $c_4c_3c_2a_1a_2$ which lead to an induced hole. Finally, if
$c_2a_2\in E(G) $ and $c_2a_1\in E(G)$, we get a $3$ - fan on vertices
$a_0,a_1,c_1,a_2$, and $c_2$. 
Thus in all cases, we get either an induced house, hole, domino, or $3$ - fan, and thus the proof is completed.
\end{proof}

 We need the following Lemma 
 
\begin{lemma}\label{HHD3fan}
 Let $R$ be a transit function on a non-empty finite set $V$ satisfying the axioms  $(b3), (J2), (J2')$ and $(J3')$ with underlying graph
$G_R$. Then $G_R$ is $HHD3$ - fan - free.
\end{lemma}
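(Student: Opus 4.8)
The plan is to split the argument into a cheap reduction followed by the one genuinely new case, the $3$-fan. Since $R$ is a transit function it satisfies $(t1),(t2),(t3)$, and by the implications recorded in the introduction $(b3)$ then forces $(b1)$. Hence $R$ satisfies $(b1),(J2),(J2')$ and $(J3')$, so Lemma~\ref{DHG} applies directly and $G_R$ is already $HHD$-free. It therefore remains only to rule out an induced $3$-fan, and this is exactly where the stronger axiom $(b3)$, rather than merely $(b1)$, will be used.

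Suppose for contradiction that $G_R$ contains an induced $3$-fan on vertices $\{u,x,y,v,w\}$, with apex $w$ and path $u,x,y,v$. Translating the adjacencies of $G_R$ into $R$, the edges give $R(u,x)=\{u,x\}$, $R(x,y)=\{x,y\}$, $R(y,v)=\{y,v\}$ together with $R(w,u)=\{w,u\}$, $R(w,x)=\{w,x\}$, $R(w,y)=\{w,y\}$, $R(w,v)=\{w,v\}$, while the non-edges give $R(u,y)\neq\{u,y\}$, $R(x,v)\neq\{x,v\}$ and $R(u,v)\neq\{u,v\}$. First I would extract betweenness facts from $(J2)$: applied to the consecutive edges $ux,xy$ it yields $x\in R(u,y)$; applied to $xy,yv$ it yields $y\in R(x,v)$; and applied to the apex edges $uw,wy$ (using $R(u,y)\neq\{u,y\}$) it yields the crucial membership $w\in R(u,y)$. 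Next I would feed the first two facts into $(J2')$ under the relabelling $(\bar u,\bar x,\bar y,\bar v)=(v,y,x,u)$: its hypotheses $\bar x\in R(\bar u,\bar y)$ and $\bar y\in R(\bar x,\bar v)$ become $y\in R(x,v)$ and $x\in R(u,y)$, the three edge conditions become $R(v,y)=\{v,y\}$, $R(y,x)=\{y,x\}$, $R(x,u)=\{x,u\}$, and the non-edge condition becomes $R(v,u)\neq\{v,u\}$, all of which hold; hence $(J2')$ delivers $y\in R(u,v)$.

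The contradiction then comes from a single application of $(b3)$. Taking $y\in R(u,v)$ and $w\in R(u,y)$, axiom $(b3)$, with its variable $x$ matched to $y$ and its variable $y$ matched to $w$, yields $y\in R(w,v)$. But $wv$ is an edge of the $3$-fan, so $R(w,v)=\{w,v\}$, forcing $y\in\{w,v\}$, which is impossible since $y$ is distinct from both. This contradiction rules out the induced $3$-fan and, with the reduction above, completes the proof. I expect the main obstacle to be locating the correct instance of $(b3)$: the naive choice built from $x\in R(u,v)$ fails, because the only relevant interval $R(u,x)$ is an edge and so carries no extra vertex to trigger $(b3)$. The argument works precisely because one instead routes through the symmetric membership $y\in R(u,v)$ produced by $(J2')$ together with the apex vertex $w\in R(u,y)$ produced by $(J2)$, so that $(b3)$ can push $y$ onto the forbidden edge $wv$.
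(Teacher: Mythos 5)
Your proof is correct and follows essentially the same route as the paper's: reduce to the $3$-fan case via $(b3)\Rightarrow(b1)$ and Lemma~\ref{DHG}, then combine two applications of $(J2)$, one application of $(J2')$, and a single application of $(b3)$ to push a path vertex onto an apex edge, yielding the contradiction. The only difference is cosmetic: you work with the mirror-symmetric instance ($y\in R(u,v)$ and the apex in $R(u,y)$, concluding $y\in R(w,v)$) where the paper uses $x\in R(u,v)$ and the apex in $R(x,v)$, concluding $x\in R(u,z)$.
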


\begin{proof}
Since for a transit function axiom $(b3)$ implies axiom $(b1)$, by Lemma~\ref{DHG}, $G_R$ is $HHD$-free. We prove that $G_R$ is also $3$ - fan - free. Suppose on the contrary, $ G_R$ contains a $3$ - fan with vertices $u,v,x,y,z$ as an induced subgraph. Let $u,x,y,v$ be the path of length three and $z$ be the vertex adjacent to all of $u,v,x,y$.  Since $R(u,x)=\{u,x\},R(x,y)=\{x,y\}, R(y,v)=\{y,v\}$ and $R(u,y)\neq \{u,y\}, R(x,v)\neq\{x,v\}$, we have by axiom $(J2), x\in R(u,y)$, and $y\in R(x,v)$.  Again since  $R(u,v)\neq \{u,v\}$ by $(J2'), x\in R(u,v)$. Again, $R(x,z)=\{x,z\}$ and $R(z,v)=\{z,v\}$, $R(x,v)\neq \{x,v\}$, by axiom $(J2)$, we have $z\in R(x,v)$. Now, we have $x\in R(u,v)$ and $z\in R(x,v)$. Hence by axiom $(b3)$, we have $x\in R(u,z)$, which is a contradiction and hence the lemma is proved.
\end{proof}

\section{Axiomatic characterization of the interval function of Ptolemaic graphs }\label{Ptolemaic graphs}

For the axiomatic characterization of $I_G$ of a Ptolemaic graph $G$, the essential axiom is $(J0)$. Ptolemaic graphs are chordal graphs that are $3$ - fan  - free.  Changat et al. in \cite{Changat-22} characterized the graphs for which the interval function satisfies the axiom $(J0)$ as follows.


\begin{theorem}\cite{Changat-22}\label{ijo}
Let $G$ be a graph. The interval function $I_G$ satisfies the axiom $(J0)$ if and only if $G$ is a Ptolemaic graph.
\end{theorem}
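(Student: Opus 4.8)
The plan is to prove both implications through the forbidden-subgraph description of Ptolemaic graphs. Recall that a Ptolemaic graph is a chordal distance hereditary graph; equivalently, by Theorem~\ref{HHD3-fan}, it is a graph with no induced cycle $C_n$ for $n\ge 4$ and no induced $3$-fan. I would keep exactly this characterization in the foreground, so that each direction amounts to matching a property of $(J0)$ against this list of obstructions.

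For necessity, suppose $I_G$ satisfies $(J0)$. Since the hypotheses of both $(J2')$ and $(J3')$ contain the hypothesis of $(J0)$, the axiom $(J0)$ immediately implies $(J2')$ and $(J3')$, so by Theorem~\ref{J2J3Prime} the graph $G$ is distance hereditary; in particular it has no induced hole $C_n$, $n\ge 5$, and no induced $3$-fan. It then remains only to rule out an induced $C_4$. If $G$ had an induced $4$-cycle on $u,x,y,v$ in this cyclic order, then $d_G(u,x)=d_G(x,y)=d_G(y,v)=d_G(v,u)=1$ while $d_G(u,y)=d_G(x,v)=2$, so $x\in I_G(u,y)$ and $y\in I_G(x,v)$, yet $I_G(u,v)=\{u,v\}$ forces $x\notin I_G(u,v)$, contradicting $(J0)$. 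Hence $G$ is also $C_4$-free, so $G$ has no induced cycle of length at least $4$; being distance hereditary as well, $G$ is chordal and distance hereditary, that is, Ptolemaic.

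For sufficiency I would argue by contraposition: assuming $(J0)$ fails, I produce a forbidden induced subgraph. Fix distinct $u,v,x,y$ with $x\in I_G(u,y)$, $y\in I_G(x,v)$ and $x\notin I_G(u,v)$, and split on $d_G(u,v)$. If $u\sim v$, then combining $d_G(u,y)=d_G(u,x)+d_G(x,y)$ and $d_G(x,v)=d_G(x,y)+d_G(y,v)$ with the triangle inequalities $d_G(u,y)\le 1+d_G(y,v)$ and $d_G(x,v)\le d_G(u,x)+1$ forces $d_G(x,y)=1$ and $d_G(u,x)=d_G(y,v)=:p$; then a $u,x$-geodesic, the edge $xy$, a $y,v$-geodesic, and the edge $vu$ form a closed walk of length $2p+2$, and since any chord would shorten one of the distances established above, this walk contains an induced cycle of length at least $4$, contradicting chordality. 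If instead $d_G(u,v)\ge 2$, I would reuse the geodesic-concatenation argument from the converse part of the proof of Theorem~\ref{J2J3Prime}: build the $u,v$-path $u\to x\to y\to v$ from a $u,y$-geodesic through $x$ and an $x,v$-geodesic through $y$, fix a $u,v$-geodesic $R$ (which necessarily misses $x$), and run the chord analysis between $R$ and the deleted path via the last common vertex $a$ before $x$ and the first common vertex $b$ after $y$. This yields an induced house, hole, domino, or $3$-fan, each of which is non-Ptolemaic, since the house and the domino contain an induced $C_4$, the hole is a $C_n$ with $n\ge 5$, and the $3$-fan is the forbidden gem.

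The step I expect to need the most care is the transplantation of the Theorem~\ref{J2J3Prime} construction, which was originally carried out under the extra hypotheses $I_G(x,y)\neq\{x,y\}$ and $I_G(u,v)\neq\{u,v\}$ encoded in $(J3')$, whereas a bare $(J0)$-violation permits $x\sim y$ and (treated separately above) $u\sim v$. I would therefore check that the path-formation step and the $a,b$-chord analysis survive intact when the shared $x,y$-subpath degenerates to a single edge, and absorb the adjacent case $u\sim v$ through the direct cycle extraction. Once these degenerate configurations are handled, every case produces one of the forbidden induced subgraphs, and the two implications together give the equivalence.
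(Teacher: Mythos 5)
Two preliminary observations. First, the paper never proves Theorem~\ref{ijo}: it is imported verbatim from \cite{Changat-22}, so your argument is necessarily an independent route, and the only fair comparison is with the machinery the paper does prove. Second, your necessity direction is correct and non-circular: for the interval function, $(J0)$ does imply $(J2')$ and $(J3')$ (the degenerate instances with coinciding vertices hold automatically for $I_G$), Theorem~\ref{J2J3Prime} is proved in the paper without any appeal to Theorem~\ref{ijo}, and your direct exclusion of an induced $C_4$ is fine. Moreover, your sufficiency case $d_G(u,v)\ge 2$ with $d_G(x,y)\ge 2$ does not need a transplanted proof at all: such a $(J0)$-violation is literally a $(J3')$-violation, so Theorem~\ref{J2J3Prime}, used as a black box, already shows $G$ is not distance hereditary, hence not Ptolemaic. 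The genuine content of the theorem beyond Theorem~\ref{J2J3Prime} is exactly the two adjacent cases ($uv\in E(G)$, or $xy\in E(G)$ with $d_G(u,v)\ge 2$), and both of your treatments of these have gaps.

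In the case $uv\in E(G)$, the claim that ``any chord would shorten one of the distances established above'' is false. The distances you have pinned down are $d(u,x)=d(y,v)=p$, $d(x,y)=d(u,v)=1$, $d(u,y)=d(x,v)=p+1$, and a chord joining the vertex at distance $i$ from $u$ on the $u,x$-geodesic to the vertex at distance $p-i$ from $y$ on the $y,v$-geodesic is compatible with all of them. The domino is a counterexample: it is precisely your closed walk for $p=2$ together with such a ``rung'' chord, and $(J0)$ does fail in it with $uv$ an edge. So the walk may carry chords; the correct conclusion (an induced cycle of length at least four exists) needs two missing steps: a proof that the two geodesics are vertex-disjoint (so the walk is a cycle -- this follows from the same distance identities, but you never say so), and a ladder argument showing all chords are rungs $i\leftrightarrow p-i$, whence the region between two consecutive rungs is an induced cycle of length at least four. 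The second gap is more serious: in the case $xy\in E(G)$, $d_G(u,v)\ge 2$, you defer to the chord analysis inside the proof of Theorem~\ref{J2J3Prime} and only promise to check that it survives. It does not survive: that proof derives the bounds $\ell\ge 4$ and $k\ge 2$ precisely from the $(J3')$ hypothesis $I(x,y)\neq\{x,y\}$, and with $x,y$ adjacent one can have $k=1$. Concretely, attach a pendant vertex $u$ to a corner of a domino, let $v$ be the corner adjacent to it across the short side, and let $x,y$ be the two corners at the far end; then $(J0)$ fails with $d_G(u,v)=2$, but in the transplanted construction the vertices $a$ and $b$ are adjacent, the vertex $a_2$ needed to launch the case analysis does not exist, and the first conclusion that analysis would draw (an induced cycle of length at least six) is false in this graph, which contains only induced $C_4$'s among cycles of length at least four. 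The repair is the same ladder argument as above, applied to the cycle through $a,x,y,b$; this is exactly the step you flagged and left unverified, so as written the proof is incomplete, although the overall strategy (reduce to Theorem~\ref{J2J3Prime} plus the two adjacent cases) is sound and repairable.
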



\begin{theorem}\label{j0j2}
	Let $R$ be any transit function defined on a non-empty set $V$.  If $R$ satisfies $(J0)$ and $(J2)$ then the underlying graph $G_R$ of $R$ is $C_n$-free for $n\ge 4$.
\end{theorem}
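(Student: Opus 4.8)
The plan is to argue by contradiction: I assume $G_R$ contains an \emph{induced} cycle $C_n = v_1 v_2 \cdots v_n v_1$ with $n \ge 4$ and derive a contradiction purely from $(J0)$, $(J2)$ and the definition of $G_R$. The starting observations are that for consecutive vertices $R(v_i,v_{i+1}) = \{v_i,v_{i+1}\}$ (these are the edges of the cycle), while for any two non-consecutive vertices $v_i,v_j$ one has $R(v_i,v_j) \neq \{v_i,v_j\}$ (there is no chord). Applying $(J2)$ to each consecutive triple $v_{i-1},v_i,v_{i+1}$, using that $v_{i-1}v_{i+1}$ is a non-edge because $n \ge 4$, immediately gives the basic ingredient $v_i \in R(v_{i-1},v_{i+1})$ for every $i$ read modulo $n$.

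The target contradiction is the single statement $v_2 \in R(v_1,v_n)$: since $v_1 v_n$ is an edge we have $R(v_1,v_n) = \{v_1,v_n\}$, yet $v_2$ is a third vertex, which is impossible. To reach it I would build intervals along the cycle with two nested inductions driven by $(J0)$. First I establish an auxiliary family $B(k)\colon v_k \in R(v_2,v_{k+1})$ for $3 \le k \le n-1$, and then the main family $C(k)\colon v_2 \in R(v_1,v_k)$ for $3 \le k \le n$; the desired contradiction is then exactly $C(n)$.

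Both families are seeded by $(J2)$: $B(3)$ is $v_3 \in R(v_2,v_4)$ and $C(3)$ is $v_2 \in R(v_1,v_3)$. For the inductive step of $B$ I apply $(J0)$ with $x=v_k$, $u=v_{k+1}$, $y=v_{k-1}$, $v=v_2$: the hypotheses $v_k \in R(v_{k+1},v_{k-1})$ (the basic $(J2)$ ingredient) and $v_{k-1}\in R(v_k,v_2)$ (which is $B(k-1)$) yield $v_k \in R(v_{k+1},v_2)$, i.e. $B(k)$. For the inductive step of $C$ I apply $(J0)$ with $u=v_1$, $x=v_2$, $y=v_{k-1}$, $v=v_k$: the hypotheses $v_2 \in R(v_1,v_{k-1})$ (which is $C(k-1)$) and $v_{k-1}\in R(v_2,v_k)$ (which is $B(k-1)$) yield $v_2 \in R(v_1,v_k)$, i.e. $C(k)$. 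Running $C$ up to $k=n$ produces the contradiction.

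I expect the only delicate point to be the distinctness requirement built into $(J0)$, which is stated for four \emph{distinct} vertices $u,v,x,y$. This is what forces me to seed each induction at $k=3$ with $(J2)$ rather than $(J0)$ (for $k=3$ the four labels would collide, e.g. $v_2$ would coincide with $v_{k-1}$), and to keep the auxiliary chain $B$ running one index ahead of $C$ so that the instance $B(k-1)$ needed in the step for $C(k)$ is always available. Verifying that the indices $1,2,k-1,k,k+1$ stay pairwise distinct and do not wrap around the cycle for $n \ge 4$ is routine but must be checked at the two ends of each range; the case $n=4$ is in fact immediate, needing only $B(3)$, $C(3)$, and a single application of $(J0)$ to obtain $C(4)$.
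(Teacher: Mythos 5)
Your proof is correct and follows essentially the same route as the paper's: seed betweenness relations with $(J2)$ on consecutive triples of the induced cycle, then propagate them along the cycle by repeated applications of $(J0)$ until a vertex is forced into the interval of a cycle edge, contradicting $R(v_1,v_n)=\{v_1,v_n\}$. The only difference is bookkeeping: the paper runs a single inductive chain $u_j\in R(u_{j+1},u_1)$ anchored at $u_1$ (which already respects the distinctness requirement of $(J0)$, so your second chain is not strictly needed), whereas you run two parallel chains anchored at $v_2$ and $v_1$.
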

\begin{proof}
	Let $R$ be a transit function satisfying $(J0)$ and $(J2)$.  Let $G_R$ contains an induced cycle say $C_n=u_1u_2\ldots u_nu_1$.  Without loss of generality assume $C_n$ is the minimum such cycle (in the sense that length of the induced cycle is as small as possible). We prove for every $k\ge 4$.\\
	\noindent
	\textbf{Case: }$k=4$.\\
Now since $R(u_1,u_2)=\{u_1,u_2\}$ and $R(u_2,u_3)=\{u_2,u_3\}$, By $(J2)$ axiom we have $u_2\in R(u_1,u_3)$ in a similar fashion we can show that $u_3\in R(u_2,u_4)$.  Since $R$ satisfies (J0)-axiom we have $u_2\in R(u_1,u_4)$, which is a contradiction as $R(u_1,u_4)=\{u_1,u_4\}$.\\
\noindent
\textbf{Case:} $k=5$.\\
As in the above case, we can see that $u_4\in R(u_5, u_3)$, this together with  $u_3\in R(u_4,u_1)$ we have $u_4\in R(u_5,u_1)$, which is again a contradiction. \\
\noindent
Case: $k\ge 6$.\\
By repeated application of  $(J2)$-axiom, as in the above two cases, $u_{n-1}\in R(u_n,u_{n-2})$ with $u_{n-2}\in R(u_{n-1},u_1)$, by applying $(J2)$, we can see that $u_{n-1}\in R(u_n,u_1)$.  Here again a contradiction.\\

Hence, in all cases, we can see that $G_R$ does not contain $C_n$ as an induced subgraph  This completes theorem.
\end{proof}



 The following straightforward Lemma for the connectedness of the underlying graph $G_R$ of a transit function $R$  is proved in \cite{mcjmhm-10}.

\begin{lemma}\label{connected}\cite{mcjmhm-10}
 If the transit function $R$ on a non-empty set $V$ satisfies axioms $(b1)$ and $(b2)$, then the underlying graph $G_R$ of $R$ is connected.
\end{lemma}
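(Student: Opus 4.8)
The plan is to show that any two vertices $u,v\in V$ are joined by a path in $G_R$, arguing by induction on $|R(u,v)|$. Since $R$ is a transit function we have $u,v\in R(u,v)$, so $|R(u,v)|\geq 1$; the value $1$ forces $u=v$ (trivially connected), and $|R(u,v)|=2$ forces $R(u,v)=\{u,v\}$, i.e. $uv\in E(G_R)$. These are the base cases. For the inductive step I would assume $|R(u,v)|=k\geq 3$ and that the claim holds for every pair whose $R$-set has fewer than $k$ elements.

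The heart of the argument is to manufacture a neighbour $x$ of $u$ that lies strictly between $u$ and $v$ and satisfies $R(x,v)\subsetneq R(u,v)$. First I would pick $x\in R(u,v)\setminus\{u\}$ minimizing $|R(u,x)|$; such an $x$ exists because $v$ is a candidate. I claim $R(u,x)=\{u,x\}$, so that $ux\in E(G_R)$. Suppose instead that some $y\in R(u,x)$ with $y\notin\{u,x\}$. Applying $(b2)$ to $x\in R(u,v)$ and $y\in R(u,x)$ gives $y\in R(u,v)$; and applying $(b2)$ to $y\in R(u,x)$ together with each $z\in R(u,y)$ gives $z\in R(u,x)$, so $R(u,y)\subseteq R(u,x)$. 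Since $y\in R(u,x)$ with $y\neq x$, axiom $(b1)$ yields $x\notin R(u,y)$, making the inclusion strict. Thus $y\in R(u,v)\setminus\{u\}$ with $|R(u,y)|<|R(u,x)|$, contradicting minimality. Hence $R(u,x)=\{u,x\}$, and because this set has two elements while $|R(u,v)|\geq 3$, automatically $x\neq v$.

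Next I would establish $R(x,v)\subsetneq R(u,v)$. Rewriting $x\in R(u,v)$ as $x\in R(v,u)$ via $(t2)$, axiom $(b2)$ with $v$ as the left endpoint gives $R(v,x)\subseteq R(v,u)$, that is $R(x,v)\subseteq R(u,v)$. The inclusion is strict because $u\notin R(x,v)$: from $x\in R(v,u)$ with $x\neq u$, axiom $(b1)$ gives $u\notin R(v,x)=R(x,v)$, while $u\in R(u,v)$. Therefore $|R(x,v)|<k$, so by the induction hypothesis $x$ and $v$ are connected in $G_R$; concatenating that path with the edge $ux$ produces a $u,v$-path and closes the induction.

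The main obstacle is keeping the orientations of the axioms straight: $(b2)$ must be used once ``from $u$'' to bound $R(u,x)$ from inside and once ``from $v$'' (after a $(t2)$ flip) to get $R(x,v)\subseteq R(u,v)$, and $(b1)$ must likewise be invoked in both orientations, once to force $x\notin R(u,y)$ and once to force $u\notin R(x,v)$. These two strictness statements are exactly what makes the relevant cardinalities drop, so securing them correctly is what guarantees the induction terminates; the remaining bookkeeping is routine.
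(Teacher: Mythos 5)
Your proof is correct. Note, however, that the paper contains no proof of this lemma to compare against: it is quoted as a known result from \cite{mcjmhm-10}, with the remark that it is ``straightforward.'' The standard argument for it is a more symmetric induction on $|R(u,v)|$: if $R(u,v)\neq\{u,v\}$, pick \emph{any} $x\in R(u,v)\setminus\{u,v\}$; then $(b2)$ gives $R(u,x)\subseteq R(u,v)$ and (after a $(t2)$ flip) $R(x,v)\subseteq R(u,v)$, while $(b1)$ gives $v\notin R(u,x)$ and $u\notin R(x,v)$, so both inclusions are proper, and the induction hypothesis applies to the two pairs $(u,x)$ and $(x,v)$ simultaneously. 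You instead invest extra work --- the minimality argument over $x\in R(u,v)\setminus\{u\}$ --- to manufacture an actual edge $ux$ of $G_R$ inside $R(u,v)$, and then induct only on the single pair $(x,v)$. Both routes are sound, and all the individual axiom applications in your write-up check out (each use of $(b1)$ and $(b2)$ is in a legitimate orientation). The symmetric version is shorter; yours buys the stronger and reusable intermediate fact that whenever $u\neq v$ the set $R(u,v)$ contains a $G_R$-neighbour of $u$, a step that recurs elsewhere in this literature (for instance when showing $R(u,v)$ induces a connected subgraph). One caveat, which applies equally to both arguments: induction on $|R(u,v)|$ and your choice of a minimizer both implicitly assume $V$ is finite, whereas the lemma as stated says only ``non-empty set''; this is harmless here because the paper's standing assumption is finite graphs, but it is worth stating the finiteness hypothesis explicitly.
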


We have the following lemma.
\begin{lemma}\label{connect}
	If $R$ is a transit function on $V$ satisfying the axioms  $(J0)$ and $(b3)$, then $R$ satisfies axiom $(b2)$ and $G_R$ is connected.
\end{lemma}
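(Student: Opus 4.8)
The plan is to split the statement into its two assertions and treat the harder one first: derive axiom $(b2)$ from $(J0)$ and $(b3)$, and then obtain connectedness essentially for free. Indeed, as recorded in the introduction, for a transit function the implications $(b3)\Rightarrow(b4)\Rightarrow(b1)$ hold, so $(b3)$ alone already yields $(b1)$. Once $(b2)$ is also in hand, Lemma~\ref{connected} immediately guarantees that $G_R$ is connected. Thus the entire content of the lemma reduces to proving $(b2)$.

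The core of the argument is a single clean combination of $(b3)$ and $(J0)$. Assume $x\in R(u,v)$ and $y\in R(u,x)$; the goal is $y\in R(u,v)$. Applying $(b3)$ to the pair $x\in R(u,v)$, $y\in R(u,x)$ produces $x\in R(y,v)$. I then feed the two facts $y\in R(u,x)$ and $x\in R(y,v)$ into $(J0)$. Matching the template $x'\in R(u',y'),\,y'\in R(x',v')\Rightarrow x'\in R(u',v')$ under the substitution $u'=u$, $v'=v$, $x'=y$, $y'=x$, the two hypotheses of $(J0)$ read exactly $y\in R(u,x)$ and $x\in R(y,v)$, and its conclusion is precisely $y\in R(u,v)$, as required.

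The only real obstacle is that $(J0)$ is stated solely for four \emph{distinct} vertices, so the substitution above is legitimate only when $u,v,x,y$ are pairwise distinct; I would therefore dispose of every coincidence first, directly from the transit axioms $(t1)$–$(t3)$. If $y\in\{u,x\}$ or $y=v$, then $y\in R(u,v)$ is immediate, since $u,v,x\in R(u,v)$. If $x=u$ or $u=v$, then $(t3)$ collapses $R(u,x)$ or $R(u,v)$ to a singleton, forcing $y=u\in R(u,v)$; and if $x=v$, then $R(u,x)=R(u,v)$, so $y\in R(u,v)$ directly. After these cases are cleared, the four vertices are distinct and the $(b3)$-then-$(J0)$ step applies verbatim, establishing $(b2)$.

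Having shown $(b2)$, and with $(b1)$ available from $(b3)$, Lemma~\ref{connected} applies to $R$ and yields that $G_R$ is connected, completing the proof. I expect no difficulty beyond the careful bookkeeping of the degenerate cases and the correct alignment of the variable roles when invoking $(J0)$.
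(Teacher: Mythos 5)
Your proposal is correct and follows essentially the same route as the paper's own proof: apply $(b3)$ to get $x\in R(y,v)$, then feed $y\in R(u,x)$ and $x\in R(y,v)$ into $(J0)$ to conclude $y\in R(u,v)$, and obtain connectedness from Lemma~\ref{connected} together with the implication $(b3)\Rightarrow(b1)$. Your explicit handling of the degenerate coincidence cases (needed because $(J0)$ is stated only for distinct vertices) is a point of care that the paper's proof silently omits, but it does not change the argument.
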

\begin{proof}
Let $R$ satisfies axioms $(J0)$ and $(b3)$.  
To prove $R$ satisfies $(b2)$.  Since $R$ satisfies $(J0)$, For  $u,v,x,y\in V$, let  $x\in R(u,v)$,  and $y\in R(u,x)$.  Since $R$ satisfies $(b3)$, we have   $x\in R(u,v), y\in R(u,x)\implies x\in R(y,v)$.  Now  $y\in R(u,x), x\in R(y,v)$ and so  by axiom $(J0)$ , we have  $y\in R(u,v)$, which implies that  $R$ satisfies $(b2)$.  Connectedness of $G_R$ follows from Lemma~\ref{connected}, since $R$ satisfies axioms $(b1)$ and $(b2)$ as axiom $(b3)$ implies axiom $(b1)$. 
\end{proof}
\begin{example}[$(J0), (J2)$ and $(b2)$ but not  $(b3)$]$~$\\
Let $V=\{u,v,w,x,y\}$.  Let $R:V\times V\rightarrow 2^V$ be defined as follows.  $R(u,v)=V, R(u,x)=\{u,y,w,x\}, R(w,v)=\{x,w,y,v\}$ and in all other cases $R(a,b)=\{a,b\}$.  Since $G_R$ is a $3$ - fan, $R$ satisfies $(J0)$ and $(J2)$.  Next to show $R$ satisfies $(b2)$ axiom.  Since $R(u,v)=V$, we can see that $R(u,x)\subseteq R(u,v)$ for all $x\in R(u,v)$  so that for this pair $R$ satisfies $(b2)$ axiom.  Now consider $R(u,x)$, we can see that $a(\neq u,x)\in R(u,x)$ we have $R(u,a)=\{u,a\}$ and $R(x,a)=\{x,a\}$ which implies that $R$ satisfies $(b2)$ axiom for this pair too.  The case is similar for $R(w,v)$.  All other pairs corresponds to edges.  Hence we can see that $R$ satisfies $(b2)$-axiom.
Now $x\in R(u,v), y\in R(u,x)$ but $x\notin R(y,v)=\{y,v\}$, and $R$ violates $(b3)$  axiom.
\end{example}

\begin{theorem}\label{Ptolemaic}
Let $R$ be any transit function satisfying the axioms  $(b3), (J0)$ and $(J2)$ then $G_R$ is Ptolemaic and $R(u,v)=I(u,v)$.
\end{theorem}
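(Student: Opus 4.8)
The plan is to prove the two assertions in turn: first that $G_R$ is Ptolemaic, and then that $R$ coincides with the interval function $I := I_{G_R}$ of its underlying graph. For the structural part I would simply assemble the forbidden-configuration results already available. Since $R$ satisfies $(J0)$ and $(J2)$, Theorem~\ref{j0j2} gives that $G_R$ is $C_n$-free for every $n\ge 4$, i.e.\ $G_R$ is chordal. Recalling from the introduction that $(J0)$ implies both $(J2')$ and $(J3')$, the function $R$ satisfies $(b3),(J2),(J2'),(J3')$, so Lemma~\ref{HHD3fan} shows $G_R$ is $HHD3$-fan-free, that is distance hereditary by Theorem~\ref{HHD3-fan}. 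As a Ptolemaic graph is precisely a chordal distance-hereditary graph, $G_R$ is Ptolemaic. Throughout the remainder I also use that $(b3)$ yields $(b4)$ and $(b1)$, and that by Lemma~\ref{connect} the pair $(J0),(b3)$ yields $(b2)$, so $R$ enjoys all of $(b1),(b2),(b4)$.

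For the inclusion $I\subseteq R$ I would induct on $d(u,v)$ along a geodesic $u=x_0,x_1,\dots,x_k=v$. The length-$2$ case is exactly $(J2)$, since $R(u,x_1)=\{u,x_1\}$, $R(x_1,v)=\{x_1,v\}$ and $R(u,v)\neq\{u,v\}$ force $x_1\in R(u,v)$. For the inductive step and an interior $x_i$, the shorter geodesics give $x_i\in R(u,x_{i+1})$ and $x_{i+1}\in R(x_i,v)$ by the hypothesis, and then $(J0)$ yields $x_i\in R(u,v)$; the two end-adjacent vertices are handled symmetrically via $R(u,v)=R(v,u)$. This direction is routine.

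The reverse inclusion $R\subseteq I$ is the heart of the matter, and I would prove it by strong induction on $|R(u,v)|$. If $w\in R(u,v)\setminus\{u,v\}$, then $(b2)$ gives $R(u,w),R(w,v)\subseteq R(u,v)$, while $(b1)$ makes both strict (as $u\notin R(w,v)$ and $v\notin R(u,w)$); the induction hypothesis identifies these sets with the corresponding intervals. The key reduction rests on a \textbf{sub-lemma}: if $u'\in R(u,v)$ and $u'\sim u$, then $d(u',v)=d(u,v)-1$, i.e.\ $u'\in I(u,v)$. Granting it, I take $u'$ to be the first vertex of a $u,w$-geodesic, so $u'\in R(u,w)\subseteq R(u,v)$; the sub-lemma gives $d(u',v)=d(u,v)-1$, and $(b3)$ applied to $w\in R(u,v),\,u'\in R(u,w)$ gives $w\in R(u',v)$. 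Since $|R(u',v)|<|R(u,v)|$, the hypothesis yields $w\in I(u',v)$, whence $d(u,w)+d(w,v)=(d(u',w)+1)+d(w,v)=d(u',v)+1=d(u,v)$, so $w\in I(u,v)$.

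\textbf{The main obstacle is this sub-lemma}: ruling out a neighbour $u'$ of $u$ lying in $R(u,v)$ at the \emph{same} distance from $v$ as $u$. Using $(b1)$ one has $u\notin R(u',v)=I(u',v)$, which already excludes $d(u',v)=d(u,v)+1$, so only $d(u',v)=d(u,v)=:d$ must be killed. Here I would traverse a $u',v$-geodesic $u'=x_0,\dots,x_d=v$ and track $d(x_i,u)$, which satisfies $d(x_i,u)\le i+1$. If some interior $x_i$ has $d(x_i,u)\le i$, then $(b3)$ (with apex $v$) forces $u'\in R(x_i,u)=I(x_i,u)$, whereas the distance count shows $u'\in I(x_i,u)$ would require $d(x_i,u)=i+1$ -- a contradiction. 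Otherwise $d(x_i,u)=i+1$ for all $i\le d-1$, and chordality (there are no $u$-$x_i$ chords, since those distances are at least $2$) makes $u,x_0,\dots,x_{d-1},v$ an \emph{induced} path of length $d+1$; as $G_R$ is distance hereditary, every induced path is a geodesic, forcing $d(u,v)=d+1$, again a contradiction. Hence $d(u',v)=d-1$. Marshalling chordality and the distance-hereditary ``induced path $=$ geodesic'' principle against the axioms $(b1),(b3)$ is the step I expect to demand the most care; once it is in place, $R\subseteq I$ together with $I\subseteq R$ gives $R=I$, completing the proof.
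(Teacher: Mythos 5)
Your proposal is correct, but it takes a genuinely different route from the paper's in both halves. For the structural half, the paper proves $3$-fan-freeness \emph{directly} from the axioms: it builds the path $u,x,y,v$ with apex $z$, derives $y\in R(u,v)$ via $(J2)$ and $(J0)$, then $z\in R(u,y)$ via $(J2)$, and gets the contradiction $y\in R(z,v)$ from $(b3)$; you instead get distance-heredity wholesale from $(J0)\Rightarrow(J2'),(J3')$ together with Lemma~\ref{HHD3fan} and Theorem~\ref{HHD3-fan}, and combine with chordality from Theorem~\ref{j0j2}. Both are legitimate; yours is shorter at the cost of importing Lemma~\ref{HHD3fan}. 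The real divergence is in the metric half. The paper inducts on $d(u,v)$ and stays entirely inside the axioms: given $x\in R(u,v)\setminus I(u,v)$ it manufactures a neighbour $x_1$ of $u$ inside $R(u,x)$ by iterating $(b1)$/$(b2)$ on a finite set, forces the triangle $u,x_1,y$ against a geodesic neighbour $y$ of $u$, and kills the two cases $d(x_1,v)=k$ and $d(x_1,v)=k-1$ with $(b3)$ and the induction hypothesis. You induct on $|R(u,v)|$ and funnel everything through your sub-lemma (a neighbour $u'$ of $u$ lying in $R(u,v)$ has $d(u',v)=d(u,v)-1$), whose proof leans on the structure already established: the distance-hereditary property of $G_R$, in the form ``induced paths are geodesics,'' eliminates your Case B, while $(b3)$ plus the induction hypothesis eliminates Case A. This ``structure first, then metric'' organisation is what your approach buys: the DH property does the work the paper does by bare-hands axiom manipulation, and taking $u'$ to be the first vertex of a $u,w$-geodesic (which lies in $R(u,w)$ by the already-proved inclusion $I\subseteq R$) is slicker than the paper's descending-chain construction of $x_1$. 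What the paper's route buys is self-containedness: its $R\subseteq I$ argument never invokes distance-heredity or any forbidden-subgraph characterization, so it is independent of the structural half. In a full write-up you should spell out three points you left implicit: the strict inclusions $|R(u',v)|<|R(u,v)|$ and $|R(x_i,u)|<|R(u,v)|$ that license the induction hypothesis (they follow from $(b2)$ with $(t2)$ for containment and $(b1)$ for strictness); the degenerate case $d(u,w)=1$, where $u'=w$ and the $(b3)$ step is skipped; and the fact that ``chordality'' in Case B is a misnomer --- the path $u,x_0,\dots,x_d$ is chordless because of the distance counts, not because $G_R$ is a chordal graph.
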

\begin{proof}
Since $R$ satisfies the axioms  $(b3), (J0)$ and $(J2)$, we have that $G_R$ is a chordal graph by Theorem~\ref{j0j2}. To prove that $G_R$ is Ptolemaic, we have to show that $G_R$ is $3$ - fan - free. Suppose that $G_R$ contains an induced $3$ - fan with vertices $u,x,y,v,z$ with $u,x,y,v$ forming a path $P_4$ and $z$ as the vertex adjacent to all the vertices $u,x,y,v$.  Since $ux$ and $xy$ are adjacent and $uy$ is not an edge, by $(J2)$,  $x\in R(u,y)$.  Similarly $y\in R(x,v)$. Since $R$ is a transit function, by $(t2)$,  $y\in R(v,x)$ and $x\in R(y,u)$ and hence by $(J0)$,  $y\in R(u,v)$. Again, since $uz$ and $zy$ are edges and $uy$ is not an edge,  $z\in R(u,y)$. That is, $y\in R(u,v)$ and $z\in R(u,y)$, by $(b3)$, we have $y\in R(z,v)$, which is not true as $zv$ is an edge. That is, we have proved that $G_R$ is a chordal graph which is $3$ - fan  - free and hence $G_R$ is a Ptolemaic graph. By Lemma~\ref{connect}, $R$ satisfies axiom $(b2)$ and $(b1)$ and $G_R$ is connected.

Now we prove that  $R(u,v)=I(u,v)$ for all $u,v\in V$.  We prove by induction on the distance between $u$ and $v$.\\
\noindent
\textbf{Case when $d(u,v)=2$}.\\
Let $x\in I(u,v)$  Hence we can see that $ux,xv\in E$.   That is, $R(u,x)=\{u,x\}, R(x,v)=\{x,v\}$ and $R(u,v)=\{u,v\}$, since $R$ satisfies $(J2), x\in R(u,v)$.  Therefore $I(u,v)\subseteq R(u,v)$.  Conversely suppose $x\in R(u,v)$.  Suppose  $x\notin I(u,v)$.  Since $d(u,v)=2$  there exists at least one element $y\in I(u,v)$ such that $uy,yv$ are edges in $G_R$.  By assumption, $x$ is not adjacent to both $u$ and $v$.  Assume that $xu$ is not an edge.  Since $x\in R(u,v)$ and $R$ satisfies $(b2)$ and $(b1), R(u,x) \subset R(u,v)$ with $|R(u,x)|< |R(u,v)|$.  By applying axioms $(b2)$ and $(b1)$ continuously on $R(u,x)$, we get  vertices $x_i,x_{i+1},\ldots, x_k,x_{k+1}=x\in R(u,x)$  such that $R(x_{i},u) \subset R(x_{i+1},u)$ and $|R(x_{i},u)| < |R(x_{i+1},u)|$, for  $i= 1,\ldots,k$ and since $V$ is finite, $R(u,x_i)= \{u,x_i\}$, for some $i$, say $i=1$. That is, we have vertices $x_1,x_2,\ldots, x_k,x_{k+1}=x\in R(u,x)$ with $R(x_1,u)=\{x_1,u\}$. Let us assume that $R(x_1,y)\neq \{x_1,y\}$.  That is $x_1y\notin E(G_R)$. Consider vertices $x_1,u,y,v$. By $(J2), u\in R(x_1,y)$ and since $y\in R(u,v)$, by $(J0), u\in R(x_1,v)$. That is, $x_1\in R(v,u), u\in R(v,x_1)$ and hence by $(b3),  x_1\in R(u,u)$, a contradiction. Therefore $R(x_1,y)=\{x_1,y\}$. This implies that $y\in R(x_1,v)$ by axiom $(J2)$, provided $R(x_1,v)\neq \{x_1,v\}$, which implies that $x_1\in R(y,u)$ by $(b3)$, a contradiction since $R(u,y)=\{u,y\}$. Therefore $R(x_1,v)=\{x_1,v\}$. That is, we have $x\in R(u,v), x_1\in R(u,x)$ and hence by $(b3), x\in R(x_1,v)$, a final contradiction. Therefore $R(u,x)=\{u,x\}$. Similarly, we can prove that $R(v,x)=\{v,x\}$.  $x\in I(u,v)$ and hence $R(u,v) \subset I(u,v)$, which completes the proof when $d(u,v)=2$. 


Let us assume that the result holds for all distances less than $k>2$ and let $u$,$v$ be two vertices such that $d(u,v)=k>2$.  We first prove $I(u,v)\subseteq R(u,v)$.  Let $x\in I(u,v)$.   Since $d(u,v)>2$, we can find another vertex $y$ in the shortest $u,v$-path containing $x$.   Now since $I$ satisfies $(b2), I(u,x)\subseteq I(u,v), I(x,v)\subseteq I(u,v)$.  So by induction we have $I(u,x)=R(u,x)$ and $I(x,v)=R(x,v)$.  Also by $(b3)$ axiom $x\in I(u,y)=R(u,y)$, $y\in I(x,v)=R(x,v)$.  Then by $(J0)$ axiom $x\in R(u,v)$.  Hence $I(u,v)\subseteq R(u,v)$.
Let $x\in R(u,v)$.  If possible let $x\notin I(u,v)$.  Since $x\in R(u,v)$, by applying axioms $(b1)$ and $(b2)$ similarly as in the case of $d(u,v)=2$, we get vertices $x_1,x_2,\ldots, x_k,x_{k+1}=x\in R(u,x)$ with $R(x_1,u)=\{x_1,u\}$  such that $R(x_{i},u) \subset R(x_{i+1},u)$ and $|R(x_{i},u)| < |R(x_{i+1},u)|$, for  $i= 1,\ldots ,k$ and  $R(x_1,u)=\{x_1,u\}$. 
Let $y$ be a vertex such that $R(u,y)=\{u,y\}$ and $y\in I_{G_R}(u,v)$. Similar to the case of $d(u,v)=2$, we can prove that $R(x_1,y)=\{x_1,y\}$. That is $u,x_1,y$ form a $c_3$ in $G_R$.  Here there are two possibilities for $d(x_1,v)$. \\
\noindent
\textbf{Case (i): $d(x_1,v)= k$}. 
In this case, since $d(u,v)=k$ and $y$ is on a shortest $u,v$-path in $G_R$ with $d(y,v)= k-1$, we have that $y$ is on a shortest $x_1,v$-path in $G_R$, that is, $y\in I_{G_R}(x_1,v) \subseteq R(x_1,v)$. Therefore, we have $x_1\in R(u,v), y\in R(x_1,v)$ and hence by $(b3), x_1\in R(y,u)$, a contradiction as $R(y,u)=\{y,u\}$.\\
\noindent
\textbf{Case(ii): $d(x_1,v)= k-1$}. 
In this case, $x_1\in I_{G_R}(u,v)$. Since $x\in R(u,v)$ and so by $(b2)$ axiom, $R(x,v) \subseteq R(u,v)$. We have also $x\in R(u,v), x_1\in R(u,x)$ and hence by axiom $(b3)$, we have $x\in R(x_1,v)= I_{G_R}(x_1,v)$,  by induction hypothesis. That is $x\in I_{G_R}(x_1,v) \subseteq I_{G_R}(u,v)$, since $x_1\in R(u,v)$, which is a contradiction to our assumption. 
Therefore in all cases, we get contradictions to the assumption and hence our assumption is wrong, that is $x\in R(u,v)\subseteq I_{G_R}(u,v)$ and hence the theorem. 
\end{proof}
The following examples show that the axioms $(J0), (J2)$ and $(b3)$ are independent.

\begin{example}[$(J0), (J2)$ but not $(b3)$]$~$\\
	Let $V=\{a,b,c,d,e\}$ and define a transit function $R$ on $V$ as follows: $R(a,b)=\{a,b\}, R(a,c)=\{a,c\}, R(a,d)=\{a,b,c,d\},R(a,e)=V, R(b,c)=\{b,c\}, R(b,d)=\{b,d\}, R(b,e)=\{b,e\}, R(c,d)=\{c,d\}, R(c,e)=\{b,c,d,e\}, R(d,e)=\{d,e\}$.  We can see that $R$ satisfies $(J0)$ and $(J2)$.  But $d\in R(a,e), b\in R(a,d)$, but $d\notin R(b,e)$.  Therefore $R$ doesnot satisfy the $(b3)$ axiom.
\end{example}
\begin{example}[$(J2), (b3)$ but not $(J0)$]$~$\\
		Let $V=\{a,b,c,d,e\}$ and define a transit function $R$ on $V$ as follows: $R(a,b)=\{a,b\}, R(a,c)=\{a,c\}, R(a,d)=\{a,b,c,d\},R(a,e)=\{a,b,e\}, R(b,c)=\{b,c\}, R(b,d)=\{b,d\}, R(b,e)=\{b,e\}, R(c,d)=\{c,d\}, R(c,e)=\{b,c,d,e\}, R(d,e)=\{d,e\}$. Here $R$ Satisfies $(J2)$ and $(b3)$.  We can see that $c\in R(a,d), d\in R(c,e)$ but $c\notin R(a,e)$.  So $R$ doesnot satisfy $(J0)$.
\end{example}
\begin{example}[$(J0), (b3)$ but not $(J2)$]$~$\\
	Let $V=\{a,b,c,d,e\}$ and define a transit function $R$ on $V$ as follows: $R(a,e)=\{a,e\}, R(b,e)=\{b,e\}, R(a,b)=\{a,b,c\}$ and for all other pair $R(x,y)=\{x,y\}$ we can see that $R$ satisfies $(J0), (b3)$ .  But  since $e\notin R(a,b)$ we can see that  $R$ fails to satisfy $(J2)$.
\end{example}
From Theorem~\ref{Ptolemaic} and Theorem~\ref{ijo}, we have the following theorem characterizing the interval function of Ptolemaic graphs.

\begin{theorem}\label{Ptolemaic_Ch}
Let $R$ be a transit function on the vertex set $V$ of a connected graph $G$. Then $G$ is a Ptolemaic graph and $R$ coincides the interval function $I_G$ of $G$ if and only if $R$  satisfies the axioms  $(b3), (J0)$ and $(J2)$ and the axiom $R(u,v) = \{u,v\}$ implies that $uv\in E(G)$. 
\end{theorem}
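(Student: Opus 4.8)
The plan is to read the statement as a biconditional and to dispatch each direction by invoking the two characterizations already established: Theorem~\ref{ijo}, which equates ``$I_G$ satisfies $(J0)$'' with ``$G$ is Ptolemaic'', and Theorem~\ref{Ptolemaic}, which shows that any transit function obeying $(b3)$, $(J0)$, $(J2)$ has a Ptolemaic underlying graph $G_R$ and satisfies $R(u,v)=I_{G_R}(u,v)$. The only genuinely new content is the passage between the abstractly reconstructed graph $G_R$ and the given ambient graph $G$, which is exactly what the extra edge axiom is meant to control.

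For the forward implication I would assume $G$ is Ptolemaic and $R=I_G$. Then $(J0)$ holds by Theorem~\ref{ijo}; $(J2)$ holds because it is satisfied by the interval function of every connected graph; and $(b3)$ holds because it is one of the classical axioms valid for $I_G$ on any connected graph. The edge axiom is then immediate: if $I_G(u,v)=\{u,v\}$ for distinct $u,v$, no vertex lies strictly between $u$ and $v$, so $d(u,v)=1$ and $uv\in E(G)$. This direction is routine.

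For the converse I would assume $R$ satisfies $(b3)$, $(J0)$, $(J2)$ and the edge axiom. Theorem~\ref{Ptolemaic} at once gives that $G_R$ is Ptolemaic and that $R(u,v)=I_{G_R}(u,v)$ for all $u,v$, so everything reduces to identifying $G$ with $G_R$: once $G=G_R$ is known, $G$ is Ptolemaic and $R=I_{G_R}=I_G$ simultaneously. By the definition of the underlying graph, $uv\in E(G_R)$ iff $R(u,v)=\{u,v\}$, and the edge axiom converts the right-hand condition into $uv\in E(G)$; this yields the inclusion $E(G_R)\subseteq E(G)$ for free.

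The step I expect to be the main obstacle is the reverse inclusion $E(G)\subseteq E(G_R)$, i.e.\ showing that every edge $uv$ of $G$ satisfies $R(u,v)=\{u,v\}$. This cannot be extracted from $R=I_{G_R}$ alone: a priori $G$ might carry an extra edge $uv$ with $R(u,v)=I_{G_R}(u,v)\neq\{u,v\}$, a shortcut absent from $G_R$, which would force $I_G\neq I_{G_R}$ and defeat both conclusions. The hypothesis that must rule this out is the edge axiom, read as fixing the adjacencies of $G$ to be \emph{exactly} the pairs with $R(u,v)=\{u,v\}$; making this reading precise, and confirming that no spurious extra edge of $G$ survives, is the delicate point on which the whole characterization turns and where I would concentrate the effort.
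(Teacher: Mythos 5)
Your overall strategy is exactly the paper's: the paper offers no separate argument for this theorem, deriving it in one line from Theorem~\ref{ijo} together with Theorem~\ref{Ptolemaic}, which is precisely your decomposition. Your forward direction is correct and complete.

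The point you flag in the converse, however, is not merely a delicate step to be worked out --- it is a genuine gap, and it cannot be closed from the hypotheses as literally stated. The axioms $(b3)$, $(J0)$, $(J2)$ constrain $R$ alone, and the edge axiom as written is one-directional, so it yields only $E(G_R)\subseteq E(G)$, exactly as you observe. The converse of the theorem is in fact false under this literal reading: take $R=I_{P_3}$, the interval function of the path $a\!-\!b\!-\!c$ (so $R(a,b)=\{a,b\}$, $R(b,c)=\{b,c\}$, $R(a,c)=\{a,b,c\}$), and let $G=K_3$ on $\{a,b,c\}$. Then $R$ satisfies $(b3)$ and $(J2)$ (classical properties of any interval function), satisfies $(J0)$ vacuously (there are no four distinct vertices), and satisfies the edge axiom trivially because $G$ is complete; yet $R\neq I_G$, since $I_{K_3}(a,c)=\{a,c\}$. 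So no amount of effort will extract $E(G)\subseteq E(G_R)$ from the stated axioms. The statement must be read (or repaired) so that the edge axiom is a biconditional, $R(u,v)=\{u,v\}$ if and only if $uv\in E(G)$, i.e.\ $G=G_R$; this is the standard convention in the Mulder--Nebesk\'{y} style of characterization that the paper is implicitly following. Under that reading the difficulty you isolated evaporates: Theorem~\ref{Ptolemaic} gives that $G_R$ is Ptolemaic and $R=I_{G_R}$, and the identification $G=G_R$ transfers both conclusions to $G$. In short, your instinct about where the whole characterization turns was correct, but the resolution is a strengthening of the hypothesis rather than a cleverer argument --- a defect present in the paper's own formulation, which its one-line proof glosses over.
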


\section{Interval function of bridged graphs}\label{bridged}


From the definitions of  $(J0)$ and $(J0')$ it follows that $(J0)\implies (J0')$.  The  example~\ref{j0'nj0} shows that  $(J0')\centernot \implies (J0)$.

\begin{example}[$(J0')\centernot \implies (J0)$]\label{j0'nj0}$~$\\
Let $V=\{a,b,c,d,e\}$  Let $R:V\times V\rightarrow 2^V$ defined as follows. $R(a,e)=\{a,e\}, R(a,b)=\{a,b\}, R(b,e)=\{b,e\}, R(b,c)=\{b,c\}, R(c,e)=\{c,e\},R(c,d)=\{c,d\}, R(d,e)=\{d,e\}, R(a,c)=\{a,b,c,e\}, R(a,d)=\{a,e,d\},R(b,d)=\{b,c,d,e\}$.  We can see that $b\in R(a,c)$ and  $c\in R(b,d)$ but $b\notin R(a,d)$, so that $R$ doesnot satisfy $(J0)$ axiom.  We can see that there exists no $u,v,x,y$ and $z$ satisfying the assumptions of the axiom $(J0')$ and hence the axiom $(J0')$  follows trivially. 

\end{example}

We now prove the theorem characterizing interval function of bridged graphs. 

\begin{theorem}
	Let $G$ be a connected graph. The interval function $I_G$ satisfies the axiom $(J0')$ if and only if $G$ is a bridged graph.
\end{theorem}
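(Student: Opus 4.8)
The plan is to prove the equivalence by working through the standard characterization that a connected graph is bridged exactly when it contains no isometric cycle of length at least $4$, and by rewriting $(J0')$ in terms of geodesics: recall that $w\in I_G(p,q)$ means precisely $d(p,w)+d(w,q)=d(p,q)$, where $d=d_G$. The role of the extra hypothesis $I_G(u,y)\cap I_G(x,v)\subseteq\{u,x,y,v\}$ in $(J0')$ (absent from $(J0)$, which by Theorem~\ref{ijo} pins down the smaller class of Ptolemaic graphs) is to keep the two geodesic ``halves'' from overlapping, and this is what will let isometric cycles, rather than chordality, govern the characterization.

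For the contrapositive direction, $(J0')\Rightarrow$ bridged, I would assume $G$ is not bridged and exhibit a failing quadruple. Let $C=c_0c_1\cdots c_{n-1}c_0$ be a shortest isometric cycle, so $n\ge 4$ and, by isometry, arc-distances along $C$ equal distances in $G$. Take the three consecutive vertices $u=c_0$, $x=c_1$, $y=c_2$ and set $v=c_{\lfloor n/2\rfloor+1}$. A direct computation from the arc-distances shows $d(u,y)=2$ with the forward geodesic $c_0c_1c_2$ through $x$; that $d(x,v)$ is realized by the forward arc $c_1c_2\cdots v$, so $y\in I_G(x,v)$; and that $d(u,v)$ is realized only by the backward arc $c_0c_{n-1}\cdots v$, which avoids $x$, so $x\notin I_G(u,v)$. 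For $n\ge 5$ the distance-two pair $u,y$ has $I_G(u,y)=\{u,x,y\}$, whence the intersection hypothesis $I_G(u,y)\cap I_G(x,v)\subseteq\{u,x,y\}\subseteq\{u,x,y,v\}$ holds automatically, and for $n=4$ one checks the intersection equals $\{u,x,y,v\}$ directly; in either case the quadruple witnesses the failure of $(J0')$.

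For the direction bridged $\Rightarrow(J0')$, I would argue by contradiction. Suppose the hypotheses of $(J0')$ hold but $x\notin I_G(u,v)$. Writing $\alpha=d(u,x)$, $\beta=d(x,y)$, $\gamma=d(y,v)$, the hypotheses give $d(u,y)=\alpha+\beta$ and $d(x,v)=\beta+\gamma$, so gluing a $u,x$-geodesic, an $x,y$-geodesic and a $y,v$-geodesic yields a $u,v$-walk $P$ of length $\alpha+\beta+\gamma$ whose length-two subpaths centred at $x$ and at $y$ are geodesic, while $x\notin I_G(u,v)$ forces $d(u,v)\le\alpha+\beta+\gamma-1$. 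Fixing a $u,v$-geodesic $W$, the closed walk $P\cup W$ is strictly shorter on the $W$-side, so it encloses a genuine cycle; the intersection hypothesis is exactly what prevents $P$ from self-overlapping, so this cycle has length at least $4$. Choosing the configuration to minimize $\alpha+\beta+\gamma+d(u,v)$ and then replacing maximal chords in the usual way, I would argue the enclosed cycle can be taken isometric of length at least $4$, which contradicts that $G$ is bridged.

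I expect the main obstacle to be controlling vertices that lie off the cycle. In the contrapositive direction the delicate point is guaranteeing $I_G(u,y)=\{u,x,y\}$: one leans on the minimality of $C$, since a common neighbour $w$ of $u$ and $y$ with $d(w,x)=2$ would produce a shorter isometric $4$-cycle; the residual case in which such a $w$ is adjacent to all of $u,x,y$ (and the genuinely separate analysis when $n=4$) must be handled by a short ad hoc argument. In the forward direction the crux is the extraction of an \emph{isometric} cycle of length at least $4$ from $P\cup W$: chords and repeated vertices have to be removed by the minimality argument without collapsing the distinguished roles of $u,x,y,v$, and it is precisely here that the hypothesis $I_G(u,y)\cap I_G(x,v)\subseteq\{u,x,y,v\}$ must be invoked to certify that the cycle produced is non-degenerate and chordless.
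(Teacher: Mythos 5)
You follow the same route as the paper in both directions: for ``$(J0')$ implies bridged'' you exhibit a witnessing quadruple on an isometric cycle (your choice of $u,x,y$ consecutive with $v$ near the antipode is, up to reversing the cycle, the paper's choice $u=v_1$, $x=v_t$, $y=v_{t+1}$, $v=v_{t+2}$), and for ``bridged implies $(J0')$'' you glue the two geodesics supplied by the premise, compare with a fixed $u,v$-geodesic, and try to extract an isometric cycle of length at least four, which is how the paper's sufficiency argument proceeds as well. So this is not a different method; the proposal stands or falls with the two delicate points you yourself flag, and the first of them is fatal.

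The residual case you defer to ``a short ad hoc argument'' --- a common neighbour $w$ of $u$ and $y$ that is also adjacent to $x$ --- cannot be closed, because the statement itself fails there. Take the wheel $W_4$: an induced (hence isometric) $4$-cycle $c_0c_1c_2c_3$ plus a hub $w$ adjacent to all four rim vertices; this graph is not bridged. Note first that $(J0')$ must be read with $u,x,y,v$ distinct: otherwise the instance $x=y$ already fails on the bridged graph $K_3$ (for a triangle $u,x,v$ one has $I(u,x)\cap I(x,v)=\{x\}$ yet $x\notin I(u,v)$), and $(J0)$ would not imply $(J0')$ as the paper asserts. Under the distinctness reading, the premise of $(J0')$ in $W_4$ forces $\{u,y\}$ and $\{x,v\}$ to be the two diagonals of the rim (these are the only pairs with nontrivial interval), and then $w\in I(u,y)\cap I(x,v)$ while $w\notin\{u,x,y,v\}$; hence the premise is never satisfiable and $I_{W_4}$ satisfies $(J0')$ vacuously. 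The same computation works for $W_5$. So in exactly your residual configuration no quadruple, on or off the cycle, witnesses a violation of $(J0')$, and taking a shortest isometric cycle does not help (your minimality refinement does repair some instances that the paper's choice gets wrong, such as a $C_5$ with an extra vertex adjacent to four consecutive rim vertices, but not the wheels). For what it is worth, the paper's own proof has the identical hole, merely hidden: it asserts without argument that $I_G(u,y)\cap I_G(x,v)$ equals $\{x,y\}$ (odd case) or $\{x,y,v\}$ (even case), which is false in $W_4$ and $W_5$. Your write-up is more honest in isolating where the work is needed, but what is needed there is not an ad hoc lemma: as stated, the equivalence is refuted by the wheels, so the gap cannot be filled without first repairing the statement or the axiom.
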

\begin{proof}
Let  $G$ has an isometric cycle $C_k= v_1v_2\ldots v_k$, $k > 3$.  If $k$ is odd, say $k=2t+1, t\geq 2$, let $u=v_1, x= v_{t}, y= v_{t+1}, v= v_{t+2}$.  Then it is easy to see that $x\in I_G(u,y)$, $y\in I_G(x,v)$ and $I_G(u,y) \cap I_G(x,v) = \{x,y\}  \subseteq \{ u,x,y,v\}$.  If $k$ is even, say, $k=2t$, $t\geq 2$, let $u=v_1, x= v_{t}$, $y= v_{t+1}, v= v_{t+2}$.  Then it is easy to see that $x\in I_G(u,y), y\in I_G(x,v)$ and $I_G(u,y) \cap I_G(x,v) = \{x,y,v\} \subseteq \{ u,x,y,v\}$. In both cases of $k$ being odd or even, $x$ is not on any shortest $u,v$-path and hence $x\notin I_G(u,v)$. This implies that If $G$ has an isometric cycle, then $I_G$ doesn't satisfy the axiom $(J0')$, that is, we have proved that if $I_G$ satisfies axiom $(J0')$ implies that $G$ is bridged graph. 

Conversely, if $G$ is a bridged graph, then we claim that  $I_G$ satisfy the axiom $(J0')$. 
Suppose not. Then there exist vertices $u,x,y,v$ in $G$ satisfying the following.   A $u,y$-geodesic $P$  containing $x$, an $x,v$-geodesic $Q$ containing $y$  with $I_G(u,y)\cap I_G(y,v)\subseteq \{u,x,y,v\}$ such that $x$ is not on any $u,v$-geodesic in $G$. Then $x$ and $ y $ should be adjacent, since $I_G(u,y)\cap I_G(y,v)\subseteq \{u,x,y,v\}$

Using the same arguments as in the proof of Theorem~\ref{J2J3Prime}, we derive the following.

\begin{itemize}
	\item[i]:  $u\rightarrow P\rightarrow x\rightarrow P \rightarrow y\rightarrow Q\rightarrow v$ is a $u,v$-path, say $P'$.
	\item[ii]:  The last vertex on $P$ before $x$ that is on a shortest $u,v$-path $R$ containing $x$ is $a$
	and the first vertex on $Q$ after $y$ that is on $R$ is $b$.
	\item[iii]: An $a,b$-subpath of $R$, $R_{a,b}:  a= z_0z_1\ldots z_{t-1}z_t=b$,$(t\geq 1)$ and an $a, b$ induced path $P'': a=u_0u_1\ldots u_\ell=xu_{\ell+1} =y\ldots u_{\ell+s} =b, (  \ell+s \geq 2)$ containing $x$ and $y$, which is a subpath of of $P'$.
	\item[iv]:  The cycle $C$ formed by the vertices of $R_{a,b} \cup P''$  has length, $l(C)$, at least four.  
	
\end{itemize}
Now, $l(C)$ cannot be four, since $C$ is isometric, a contradiction to $G$ being a bridged graph, which implies that the length of the path  $R_{a,b}$, namely $t$ is strictly greater than one.
Since $G$ is a bridged graph, it follows that, there are chords  from vertices $z_i$, $i=1,\ldots t-1$ to vertices $u_1,\ldots,u_\ell=x,u_{\ell+1} =y, \ldots u_{\ell+s-1} $   of $P''$ so that the only isometric cycles are triangles.\\ 
\noindent
\textbf{Case 1: $ y \neq b $}.\\
We claim that the index $ t=\ell+s-1$ . Since  $P''$  is not a $u,v$-geodesic   containing $x$,
we have $ t \leq \ell+s-1$. If  $ t \leq \ell+s-2$, since the only isometric cycles are triangles, we get  $P$ is not a $u,y$-geodesic containing $x$ or $Q$ is not a  $x,v$-geodesic  containing $y$. There for  $ t=\ell+s-1$.  Also if  $1\leq i \leq \ell-1$ ,$1\leq j \leq \ell$, $z_i$  cannot be adjacent to $u_j$  for $ j\geq i+2 $ , since   $P$ is a $u,y$-geodesic containing $x$, and if $\ell+1\leq i \leq t-1,\ell\leq  j \leq \ell+s-1$, $z_i$  cannot be adjacent to $u_j$  for $ j\leq i-1 $ , since $Q$ is a  $x,v$-geodesic  containing $y$. Which implies that  $ z_\ell $ is adjacent to both  $u_\ell=x$ and $u_{\ell+1} =y$, otherwise there exist an induced $4$ - cycle on $ \{z_{\ell-1}, z_\ell, x,y \}  $ or $ \{ z_\ell, x,y, z_{\ell+1}\} $.  Then  the path $u\ldots az_1\ldots z_{\ell}  y$  is also a  $u,y$ shortest  path and  the path  $xz_{\ell} \ldots z_t=b\ldots v$ is also a  $x,v$-shortest path. This implies that the vertex $z_{\ell}$, which is different from $u,x,y,v$ also belongs to $I_G (u,y)\cap I_G(x,v)$, a contradiction to the hypothesis of the axiom $(J0')$.\\
\noindent
\textbf{Case 2: $ y=b $}. \\
Since $P$ is  a $u,y$-geodesic containing $x$, $ t=\ell+s $. In this case $ I_G(x,v) $ does not contain any of $ z_i, i=1,2,\ldots,t-1 $. Which implies that $I_G (u,y)\cap I_G(x,v)  \subseteq \{u,x,y,v\} $, a contradiction to the hypothesis of the axiom $(J0')$.
Therefore in both case $I_G$ satisfies the axiom $(J0')$, which completes the proof of  sufficiency part.   
\end{proof}

\section{Concluding Remarks}\label{conclusion}
We conclude the paper by discussing another graph transit function, namely the induced path transit function for a distance hereditary graph. 
By replacing shortest paths by induced paths in a graph $G$, we get the induced path transit function $J_G$. This function is also well studied. For example, see the references;  \cite{chma-04, chma-06, momu-02, vel-93}.  Nebesk\'{y} proved in \cite{nebe-02} a very interesting result: there does not exist a characterization of the induced path function $J$ of a connected graph using a set of first - order axioms.  Changat et al. in \cite{mcjmhm-10} characterized the induced path transit function axiomatically on $HHD$-free and $HHP$-free graphs. 

Formally the induced path function $J_G$ of $G$, is defined as 
\[J(u,v)= \{w \mid w \mbox{ lies on an induced } u,v \mbox{-path} \}.\]

Next, we define an axiom $(J1)$, which is used in the following discussions. 

\begin{enumerate}
\item[$(J1)$:] $w\in R(u,v), w\neq u, v \Rightarrow $ there exists $u_1\in R(u,w)\setminus R(v,w), v_1\in R(v,w)\setminus R(u,w)$, such that
$R(u_1,w)=\{u_1,w\}$, $R(v_1,w)=\{v_1,w\}$ and $w\in R(u_1,v_1)$.
\end{enumerate}

The following result is proved in \cite{mcjmhm-10}.

\begin{theorem}\cite{mcjmhm-10}\label{induced}
 Let $R$ be a transit function on a non-empty finite set $V$ satisfying the axioms  $(b1), (b2), (J1), (J2), (J2')$ and $(J3')$ with underlying graph
$G_R$. Then $G_R$ is $HHD$-free and $R$ is precisely the induced path transit function of $G_R$. 
\end{theorem}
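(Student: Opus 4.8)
The plan is to split the statement into its two assertions and treat them in order, writing $J=J_{G_R}$ for brevity. The $HHD$-freeness of $G_R$ comes essentially for free: since the hypotheses include $(b1),(J2),(J2')$ and $(J3')$, Lemma~\ref{DHG} applies verbatim and already yields that $G_R$ is house-, hole- and domino-free. So the only genuine work is the identity $R(u,v)=J(u,v)$, which I would establish by proving the two inclusions separately.

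For the inclusion $R(u,v)\subseteq J(u,v)$ the natural tool is axiom $(J1)$, and the argument is an induction on $|R(u,v)|$. Given $w\in R(u,v)$ with $w\neq u,v$, axiom $(J1)$ produces neighbours $u_1\in R(u,w)\setminus R(v,w)$ and $v_1\in R(v,w)\setminus R(u,w)$ of $w$ (here $R(u_1,w)=\{u_1,w\}$ means $u_1\sim w$ in $G_R$) with $w\in R(u_1,v_1)$. Using $(b2)$ one checks $R(u,u_1)\subseteq R(u,w)\subseteq R(u,v)$, and $(b1)$ makes both containments strict (since $w\notin R(u,u_1)$ and $v\notin R(u,w)$), so the inductive hypothesis furnishes an induced $u$--$u_1$ path, which extends to an induced $u$--$w$ path by appending the edge $u_1w$; symmetrically one builds an induced $w$--$v$ path through $v_1$. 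The separation conditions $u_1\notin R(v,w)$ and $v_1\notin R(u,w)$ are exactly what is needed to glue these two halves into a single induced $u$--$v$ path through $w$, giving $w\in J(u,v)$.

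For the reverse inclusion $J(u,v)\subseteq R(u,v)$ I would induct on the length $k$ of an induced $u$--$v$ path $P:u=p_0,p_1,\dots,p_k=v$ and show every $p_i$ lies in $R(u,v)$. The base cases $k\le 3$ are dispatched by the local axioms: $(J2)$ handles $k=2$, and for $k=3$ one first applies $(J2)$ to obtain $p_1\in R(u,p_2)$ and $p_2\in R(p_1,v)$ and then invokes $(J2')$, whose hypotheses are precisely that $u,p_1,p_2,v$ form an induced $P_4$, to conclude $p_1,p_2\in R(u,v)$. The inductive step is the crux: the subpaths give, by induction, $p_i\in R(u,p_{k-1})$ and $p_i\in R(p_1,v)$, but neither $(J2')$ nor $(J3')$ lifts these directly into $R(p_0,p_k)$, since $(J2')$ would require $p_0\sim p_{k-2}$ and $(J3')$ would require $p_{k-2}\not\sim p_{k-1}$, and neither holds along an induced path. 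This is exactly where $HHD$-freeness must enter: I expect to show that in a house-, hole- and domino-free graph the obstructing configuration cannot occur without creating one of the forbidden subgraphs, so that a short induced detour, again controlled by $(J2')/(J3')$ together with $(b2)$, always permits the one-step extension from $R(p_0,p_{k-1})$ to $R(p_0,p_k)$.

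The main obstacle is precisely this last inductive step for $J\subseteq R$. The betweenness axioms $(J2'),(J3')$ are purely local (essentially $P_4$-sized), so the entire burden of handling arbitrarily long induced paths is carried by the forbidden-subgraph structure. The heart of the proof is therefore a structural lemma about $HHD$-free graphs guaranteeing that every induced path can be peeled one vertex at a time without meeting a house, hole, or domino, after which $(b2)$ propagates membership in $R$ along the whole path.
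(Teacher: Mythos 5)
Before comparing: the paper itself gives no proof of Theorem~\ref{induced} --- it is quoted from \cite{mcjmhm-10} --- so your proposal has to stand on its own, and it does not yet. Your skeleton is the right shape (HHD-freeness from Lemma~\ref{DHG}, then the two inclusions, with $(J1)$ driving $R\subseteq J$ and an induction on path length driving $J\subseteq R$), but the decisive step of the first inclusion is missing. From $(J1)$, $(b1)$, $(b2)$ and the inductive hypothesis applied to the smaller pairs $(u,w)$ and $(w,v)$ you do get an induced $u,w$-path $P_1$ ending with the edge $u_1w$ and an induced $w,v$-path $P_2$ beginning with the edge $wv_1$ (apply the hypothesis to the pair $(u,w)$ rather than $(u,u_1)$: since $u_1$ lies on an induced $u,w$-path and $u_1w$ is an edge, $u_1$ is automatically the penultimate vertex). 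But concatenating $P_1$ and $P_2$ at $w$ need not produce an induced path: the two halves may share interior vertices, or have chords joining an interior vertex of $P_1$ to one of $P_2$. The conditions $u_1\notin R(v,w)$, $v_1\notin R(u,w)$ and $R(u_1,v_1)\neq\{u_1,v_1\}$ constrain only $u_1$ and $v_1$, not the rest of the paths, so they are not ``exactly what is needed.'' The real content of this direction is a case analysis showing that any such chord or intersection either creates a house, hole or domino (contradicting HHD-freeness of $G_R$) or places $u_1$ on an induced $w,v$-path, i.e.\ $u_1\in J(v,w)=R(v,w)$ by the inductive equality on the smaller pair, contradicting the separation. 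None of this appears in your write-up; note also that this, and not the other inclusion, is where HHD-freeness actually does its work.

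The second problem is a misdiagnosis in $J(u,v)\subseteq R(u,v)$: no structural ``peeling'' lemma about HHD-free graphs is needed, and leaving the inductive step to such an unformulated lemma is itself a gap. The step closes purely axiomatically once you choose the right quadruple. For an induced path $P: u=p_0,p_1,\dots,p_k=v$ with $k\geq 4$, apply $(J3')$ to $(u,x,y,v)=(p_0,p_1,p_{k-1},p_k)$: the induction hypothesis on the two subpaths of length $k-1$ gives $p_1\in R(p_0,p_{k-1})$ and $p_{k-1}\in R(p_1,p_k)$, while $R(p_1,p_{k-1})\neq\{p_1,p_{k-1}\}$ and $R(p_0,p_k)\neq\{p_0,p_k\}$ because $P$ is induced and $k\geq 4$; hence $p_1\in R(p_0,p_k)$. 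Then $(b2)$ together with $(t2)$ and the induction hypothesis $p_i\in R(p_1,p_k)$ yields $p_i\in R(p_0,p_k)$ for every $i$, completing the induction (your base cases $k=2,3$ via $(J2)$ and $(J2')$ are fine). So the difficulty you ran into came from instantiating $(J3')$ at $(p_0,p_i,p_{k-1},p_k)$ with $i=k-2$, not from any need for forbidden-subgraph arguments in this direction.
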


We have the following proposition for a transit function $R$ on $V$.

\begin{proposition}\label{b2b3j1}
If a transit function $R$ on $V$ satisfies the axioms $(b2)$ and $(b3)$ then $R$ satisfies axiom  $(J1)$.
\end{proposition}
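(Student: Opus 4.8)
The plan is to unpack axiom $(J1)$ into three separate tasks for a fixed $w\in R(u,v)$ with $w\neq u,v$: (i) produce a neighbour $u_1$ of $w$ lying in $R(u,w)\setminus R(v,w)$; (ii) symmetrically produce a neighbour $v_1$ of $w$ in $R(v,w)\setminus R(u,w)$; and (iii) verify $w\in R(u_1,v_1)$. I would first record the free consequences of the hypotheses: since $(b3)$ implies $(b4)$ and $(b1)$ (as noted in the introduction), from $w\in R(u,v)$ we obtain $R(u,w)\cap R(v,w)=\{w\}$ by $(b4)$ together with $(t2)$. This single observation discharges the set-difference requirements in (i) and (ii): any vertex of $R(u,w)$ other than $w$ is automatically outside $R(v,w)$, and vice versa.

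The core of the argument, and the step I expect to be the main obstacle, is (i): extracting an actual neighbour of $w$ inside $R(u,w)$. Here I would argue by minimality. The set $\{z\in R(u,w): z\neq w\}$ is nonempty since $u$ belongs to it by $(t1)$, so as $V$ is finite I can choose $u_1=z$ in this set with $|R(z,w)|$ smallest, and claim $R(z,w)=\{z,w\}$. Suppose not, and pick $p\in R(z,w)$ with $p\neq z,w$. Using $(t2)$ and $(b2)$ I would show $p\in R(u,w)$ (from $z\in R(w,u)$ and $p\in R(w,z)$, axiom $(b2)$ gives $p\in R(w,u)$), and that $R(p,w)\subseteq R(z,w)$ (every $q\in R(w,p)$ lies in $R(w,z)$ again by $(b2)$). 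Finally $(b1)$ gives $z\notin R(w,p)$ because $p\in R(w,z)$ with $p\neq z$; since $z\in R(z,w)$ by $(t1)$, the inclusion is proper, whence $|R(p,w)|<|R(z,w)|$ with $p\in R(u,w)$ and $p\neq w$, contradicting minimality. Thus $R(u_1,w)=\{u_1,w\}$, and by the $(b4)$ observation $u_1\in R(u,w)\setminus R(v,w)$. Task (ii) is identical with the roles of $u$ and $v$ interchanged.

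For (iii) I would apply $(b3)$ twice. Since $w\in R(u,v)$ and $u_1\in R(u,w)$, axiom $(b3)$ gives $w\in R(u_1,v)$; rewriting this as $w\in R(v,u_1)$ via $(t2)$ and using $v_1\in R(v,w)$, a second application of $(b3)$ yields $w\in R(v_1,u_1)$, i.e. $w\in R(u_1,v_1)$ after $(t2)$. Collecting (i)--(iii) verifies every clause of $(J1)$, completing the proof. The only delicate point is the finiteness-plus-minimality extraction in (i); everything else is a direct chase through $(b1)$--$(b4)$ and $(t2)$, and I do not anticipate any hidden difficulty there.
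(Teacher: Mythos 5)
Your proof is correct, and its skeleton coincides with the paper's: both arguments extract, by a finite descent inside $R(u,w)$ driven by $(b2)$ and $(b1)$, an element $u_1$ with $R(u_1,w)=\{u_1,w\}$ (the paper builds an explicit strictly decreasing chain $R(x_\ell,w)\subsetneq\cdots\subsetneq R(x_1,w)\subsetneq R(u,w)$, while you take a minimizer of $|R(z,w)|$ --- the same descent phrased extremally), and both finish with exactly the same double application of $(b3)$ to obtain $w\in R(u_1,v_1)$. The genuine difference lies in how the set-difference clauses of $(J1)$ are discharged. The paper proves $R(u,w)\nsubseteq R(v,w)$ and $R(v,w)\nsubseteq R(u,w)$ via $(b1)$ and starts its chains at points $x_1\in R(u,w)\setminus R(v,w)$ and $y_1\in R(v,w)\setminus R(u,w)$; but what $(J1)$ actually requires is that the \emph{terminal} elements $x_\ell$ and $y_m$ of the chains lie in these differences, and this is never verified in the paper's write-up. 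Your opening observation that $(b3)$ together with the transit axioms yields $(b4)$, hence $R(u,w)\cap R(v,w)=\{w\}$, makes the point moot: every element of $R(u,w)$ other than $w$ is automatically outside $R(v,w)$, so any neighbour of $w$ found inside $R(u,w)$ qualifies. Thus your route is slightly cleaner and closes a small gap the paper leaves open, at the cost of invoking one extra derived axiom ($(b4)$) that the paper's argument never uses.
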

\begin{proof}
	Let $V$ be any non-empty set, and $R$ be a transit function defined on $V$.  We know that $R$ satisfies $(b3)$ implies $R$ satisfies $(b1)$.  Let $w\in R(u,v)$.  Since $R$ satisfies $(b2)$, we can see that $R(u,w)\subseteq R(u,v)$.  Again since $R$ satisfies $(b3)$, we can see that $v\notin R(u,w)$ (other wise $R$ will not satisfy the $(b1)$ axiom).  So we have $R(u,w)\subsetneq R(u,v)$ and $R(v,w)\subsetneq R(u,v)$.  \\
	\textbf{Claim :} $R(u,w)\nsubseteq R(v,w)$ and $R(v,w)\nsubseteq R(u,w)$.\\
	If  $R(u,w)\subseteq R(v,w)$, then we can see that $u\in R(v,w)$ a contradiction to the fact that $R$ satisfies $(b1)$ axiom.  In a similar fashion if $R(v,w)\subseteq R(u,w)$, we will get a contradiction.\\
	So there exists a vertices $x_1\in R(u,w)\setminus R(v,w)$ and $y_1\in R(v,w)\setminus R(u,w)$.  Consider $R(x_1,w)$ and $R(y_1,w)$.  Since $R$ satisfies $(b2), (b1)$, we have (as in the above lines) $R(x_1,w)\subsetneq R(w,u)\subsetneq R(u,v)$ and $R(y_1,w)\subsetneq R(w,v)\subsetneq R(u,v)$.  Continuing like this we can get a sequence of vertices $x_1,x_2,\ldots, x_\ell$ and $y_1,y_2,\ldots, y_m$ so that $R(x_\ell,w)\subsetneq R(x_{\ell-1},w)\subsetneq\ldots\subsetneq R(x_1,w)\subsetneq R(w,u)\subsetneq R(u,v)$ and $R(y_m,w)\subsetneq R(y_{m-1},w)\subsetneq\ldots\subsetneq R(y_1,w)\subsetneq R(w,v)\subsetneq R(u,v)$, with $R(x_\ell,w)=\{x_\ell,w\}, R(y_m,w)=\{y_m,w\}$.  \\ 
	Without loss of generality we assume $x_\ell=x$ and $y_m=y$.
	Now since $R$ satisfies $(b3)$ we have  $w\in R(u,v), x\in R(u,w)\implies w\in R(x,v)$.  Again using the $(b3)$ axiom, $w\in R(x,v), y\in R(w,v)\implies w\in R(x,y)$.  Hence $R$ satisfies the $(J1)$ axiom.
\end{proof}
\begin{proposition}
If a transit function $R$ on $V$ satisfies the axioms 	$(J1), (b2)$ then $R$ satisfies $(b1)$.
\end{proposition}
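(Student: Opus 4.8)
The plan is to argue by contradiction. Suppose $R$ satisfies $(J1)$ and $(b2)$ but fails $(b1)$; then there are vertices with $x\in R(u,v)$, $x\neq v$, yet $v\in R(u,x)$. First I would dispose of the degenerate cases so that $(J1)$ can legitimately be invoked on the second membership: if $u=v$ then $(t3)$ forces $R(u,v)=\{u\}$, whence $x=u=v$, contradicting $x\neq v$; thus $u\neq v$, and together with $x\neq v$ this yields both $v\neq u$ and $v\neq x$, which are exactly the hypotheses ``$w\neq u,v$'' needed to apply $(J1)$ to $v\in R(u,x)$.

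The key structural step is to extract a containment from $(b2)$. Since $x\in R(u,v)=R(v,u)$ by $(t2)$, applying $(b2)$ with base point $v$ (taking $x$ as the intermediate vertex of $R(v,u)$) shows that every element of $R(v,x)$ lies in $R(v,u)$; that is, $R(x,v)=R(v,x)\subseteq R(u,v)$. This is the heart of the argument: the hypothesis $x\in R(u,v)$ forces the entire set $R(x,v)$ to sit inside $R(u,v)$.

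Next I would apply $(J1)$ to the membership $v\in R(u,x)$. Reading $(J1)$ with endpoints $u,x$ and midpoint $v$, its conclusion produces (among other things) a vertex $v_1\in R(x,v)\setminus R(u,v)$. But this directly contradicts the containment $R(x,v)\subseteq R(u,v)$ established above, since the latter says $R(x,v)\setminus R(u,v)=\emptyset$. This contradiction completes the proof; note that only the bare existence of $v_1$ from the $(J1)$ conclusion is needed, not the adjacency clauses $R(v_1,v)=\{v_1,v\}$ nor the final betweenness condition.

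The main obstacle I anticipate is purely bookkeeping: correctly permuting the roles of the endpoints and the midpoint when applying $(J1)$ to $v\in R(u,x)$ (rather than to the more tempting $x\in R(u,v)$), so that the difference set appearing in the conclusion is exactly $R(x,v)\setminus R(u,v)$ and not some other combination; and securing the containment $R(x,v)\subseteq R(u,v)$ through the symmetric reading of $(b2)$ enabled by $(t2)$. Once these two ingredients are aligned, the contradiction is immediate.
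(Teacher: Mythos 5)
Your proof is correct and relies on the same mechanism as the paper's own proof: use $(b2)$ (via the symmetry axiom $(t2)$) to force a containment between transit sets, and then contradict it with the element that the existence clause of $(J1)$ guarantees lies \emph{outside} that containment. The only difference is bookkeeping — the paper applies $(J1)$ to the membership $x\in R(u,v)$ after first establishing the equality $R(u,x)=R(u,v)$ by two further applications of $(b2)$, whereas you apply $(J1)$ to $v\in R(u,x)$ and need only the single containment $R(x,v)\subseteq R(u,v)$, which makes your version marginally shorter and also more careful about the degenerate case $u=v$, which the paper's proof glosses over.
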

\begin{proof}
	Let $V$ be any non empty set.  Let $R$ be any transit function defined on $V$.  Let $R$ satisfy the axioms $(J1), (b2)$.  If possible assume that $R$ doesnot satisfy $(b1)$.  Therefore there exists $u,v,w$ with $w\in R(u,v)$ and $v\in R(u,w)$.  Since $R$ satisfies $(b2)$, and $v\in R(u,w)$ we have $R(u,v)\subseteq R(u,w)$.  Again since $w\in R(u,v)$, we must have $R(u,w)\subseteq R(u,v)$.  So we have $R(u,w)=R(u,v)$.  Now since $R$ satisfies $(J1)$, there should exist an element $y\in R(v,w)\setminus R(u,w)$.  Now we have $R(v,w)\subseteq R(u,v)$ so we have $R(v,w)\setminus R(u,w)=R(v,w)\setminus R(u,v)=\emptyset$, a contradiction to the assumption of $(J1)$.  So our assumption is wrong and $R$ satisfies $(b1)$.
\end{proof}

The example below shows that axioms $(b2), (J1)$, doesn't imply axiom $(b3)$. 

\begin{example}[$(b2), (J1)$ but not $(b3)$]$~$\\
	Let $V=\{u,v,x,y,z\}$.  Define $R$ on $V$ as follows.$R(u,v)=V, R(u,y)=\{u,y\}, R(u,x)=\{u,y,z,x\}, R(u,z)=\{u,z\}, R(z,y)=\{z,y\}, R(z,x)=\{z,x\}$, $R(z,v)=\{z,y,x,v\}, R(x,v)=\{x,v\}, R(x,y)=\{x,y\}, R(y,v)=\{y,v\}$.  We can easily see that $R$ satisfies $(b2)$ and $(J1)$.  Now $x\in R(u,v), y\in R(u,x)$, but $x\notin R(y,v)$.  So that $R$ fails to satisfy $(b3)$ axiom.
\end{example}
The following examples establish that the axioms $(J2), (J2'), (J3'), (b3)$ and $(b2)$ are independent.

\begin{example}[$(J2), (J2'), (J3'),(b3)$ but not $(b2)$]$~$\\
	Let $V=\{a,b,c,d,e\}$.  Define a transit function $R$ on $V$ as follows: $R(a,b)=\{a,b\}, R(a,c)=\{a,b,c \},R(a,d)=\{a,b,c,d\}, R(a,e)=\{a,b,d,e\}, R(b,c)=\{b,c\},R(b,d)=\{b,c,d\}, R(b,e)=\{b,c,d,e\},R(c,d)=\{c,d\}, R(c,e)=\{c,d,e\}, R(d,e)=\{d,e\}$   We can see that $R$ does not satisfy $(b2)$ as $d\in R(a,e)$ but $R(a,d)\nsubseteq R(a,e)$.  But we can see that $R$ satisfies $(J2), (J2'), (J3'),(b3)$.	
\end{example}
\begin{example}[$(J2),(J2'), (J3'),(b2)$ but not $(b3)$]$~$\\
	Let $V=\{a,b,c,d,e\}$.  Define a transit function $R$ on $V$ as follows: $R(a,b)=\{a,b\},R(a,c)=\{a,c\},R(a,d)=\{a,b,c,d\},R(a,e)=V, R(b,c)=\{b,c\}, R(b,d)=\{b,d\}, R(b,e)=\{b,e\}, R(c,d)=\{c,d\}, R(c,e)=\{c,b,d,e\}, R(d,e)=\{d,e\}$.  We can see that $R$ satisfies $(J2),(J2'),(J3'),(b2)$.  Now $d\in R(a,e), b\in R(a,d)$ but we can see that $d\notin R(b,e)$, so $R$ doesnot satisfy the $(b3)$ axiom.
\end{example}
\begin{example}[$(J2), (J3'),(b2),(b3)$ but not $(J2')$]$~$\\
	Let $V=\{a,b,c,d,e\}$.  Define a transit function $R$ on $V$ as follows: $R(a,b)=\{a,b\},R(a,c)=\{a,c\},R(a,d)=\{a,b,c,d\},R(a,e)=\{a,b,e\}, R(b,c)=\{b,c\}, R(b,d)=\{b,d\}, R(b,e)=\{b,e\}, R(c,d)=\{c,d\}, R(c,e)=\{c,b,d,e\}, R(d,e)=\{d,e\}$.  We can see that $R$ satisfies $(J2),(J3'),(b2),(b3)$.  We have $c\in R(a,d), d\in R(c,e),R(a,c)=\{a,c\}, R(a,d)=\{a,d\}, R(d,e)=\{d,e\}$ but $c\notin R(a,e)=\{a,b,e\}$.  Hence $R$ does not satisfy $(J2')$. 
\end{example}
\begin{example}[$(J2'),(b2), (J3'),(b3)$ but not $(J2)$]$~$\\
	Let $V=\{a,b,c,d,e\}$.  Define a transit function $R$ on $V$ as follows: $R(a,b)=\{a,b,c\}, R(a,e)=\{a,e\},R(b,e)=\{b,e\}$ and for all other pair define $R(x,y)=\{x,y\}$.  We can see that $R$ satisfies $(J2'),(J3'),(b2),(b3)$.  But  $R(a,e)=\{a,e\}, R(b,e)=\{b,e\}$ and $e\notin R(a,b)=\{a,b,c\}$.  So $R$ does not satisfy $(J2)$.
\end{example}
\begin{example}[$(J2),(J2'),(b2),(b3)$ but not $(J3')$]$~$\\
	Let $V=\{u,v,w,x,y,z\}$. Define a transit function $R$ on $V$ as follows: $R(u,x)=\{u,x\}, R(u,z)=\{u,x,z\}, R(u,y)=V=R(x,v)=R(z,w), R(u,v)=\{u,w,y\}, R(u,w)=\{u,w\}, R(x,z)=\{x,z\}, R(x,y)=\{x,z,y\}, R(x,w)=\{x,w\}, R(z,y)=\{z,y\}, R(z,v)=\{z,v\}, R(y,v)=\{y,v\}, R(y,w)=\{y,w\},R(v,w)=\{v,w\}$.  We can see that $R$ satisfies $(J2),(J2'),(b2),(b3)$.   But $x\in R(u,y), y\in R(x,v), R(x,y)\neq \{x,y\}, R(u,v)\neq \{u,v\}$ but $x\notin R(u,v)$.  So $R$ does not satisfy $(J3')$.
\end{example}

We have already noted in the introductory section that axiom $(b3)$ implies axiom $(b1)$, for any transit function $R$. 

Therefore, we replace $(b1)$ by $(b3)$ in Theorem~\ref{induced} and using Lemma~\ref{HHD3fan} and Proposition~\ref{b2b3j1}, we can reformulate Theorem~\ref{induced} using a minimal set of independent axioms as 

\begin{theorem}\label{Dis-h}
 Let $R$ be a transit function on a non-empty finite set $V$ satisfying the axioms  $(b2), (b3), (J2), (J2')$ and $(J3')$ with underlying graph
$G_R$. Then $G_R$ is $HHD3$-fan -free (distance hereditary graph) and $R$ is precisely the induced path transit function of $G_R$. 
\end{theorem}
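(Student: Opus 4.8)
\textbf{Proof proposal for Theorem~\ref{Dis-h}.}

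The plan is to obtain this theorem as a reformulation of the already-established Theorem~\ref{induced}, simply by trading the hypothesis $(b1)$ for the logically stronger hypothesis $(b3)$ and discharging the now-redundant axiom $(J1)$. The first step is to observe that the hypotheses $(b2), (b3)$ are available, so by Proposition~\ref{b2b3j1} the transit function $R$ automatically satisfies $(J1)$. Thus from the five axioms $(b2), (b3), (J2), (J2'), (J3')$ we recover $(J1)$ for free, and since $(b3)$ implies $(b1)$ (noted in the introductory implications), $R$ in fact satisfies the full list $(b1), (b2), (J1), (J2), (J2'), (J3')$ required by Theorem~\ref{induced}.

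With that list secured, I would invoke Theorem~\ref{induced} directly to conclude two things: that $R$ is precisely the induced path transit function $J_{G_R}$ of its underlying graph, and that $G_R$ is $HHD$-free. This gives the ``induced path transit function'' half of the conclusion with no extra work. The remaining task is to strengthen $HHD$-free to $HHD3$-fan-free, i.e.\ to additionally exclude the $3$-fan as an induced subgraph; this is where Lemma~\ref{HHD3fan} enters. Lemma~\ref{HHD3fan} asserts exactly that a transit function satisfying $(b3), (J2), (J2')$ and $(J3')$ has an $HHD3$-fan-free underlying graph, and all four of those axioms are among our hypotheses, so I would apply it to conclude that $G_R$ is $HHD3$-fan-free. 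By Theorem~\ref{HHD3-fan} this is equivalent to $G_R$ being distance hereditary, yielding the parenthetical identification in the statement.

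The only genuine content here lies in the auxiliary results rather than in the theorem itself: the real work was done in proving Proposition~\ref{b2b3j1} (that $(b2)$ and $(b3)$ force $(J1)$) and Lemma~\ref{HHD3fan} (the $3$-fan exclusion via $(b3)$), both of which I may assume. Consequently I do not anticipate any serious obstacle; the theorem is a packaging result that assembles Theorem~\ref{induced}, Proposition~\ref{b2b3j1}, Lemma~\ref{HHD3fan}, and Theorem~\ref{HHD3-fan} into a single statement with a minimal, independent axiom set. The one point warranting a sentence of care is to confirm that the axiom list is indeed \emph{minimal} and \emph{independent}, as the surrounding text claims; this is precisely what the battery of examples immediately preceding the theorem is designed to certify, so I would simply cite their collective effect rather than re-derive independence inside the proof.
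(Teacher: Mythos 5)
Your proposal is correct and follows exactly the paper's own route: the paper likewise obtains Theorem~\ref{Dis-h} by replacing $(b1)$ with $(b3)$ in Theorem~\ref{induced}, discharging $(J1)$ via Proposition~\ref{b2b3j1}, and upgrading $HHD$-free to $HHD3$-fan-free via Lemma~\ref{HHD3fan}. Nothing is missing; this is the same packaging argument the authors intend.
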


A distance hereditary graph $G$ is precisely the graph in which every induced path is a shortest path and hence both the induced path transit function and the interval function coincide in $G$.  Therefore we have that Theorem~\ref{Dis-h} also holds for the interval function of $G_R$. 
 That is, we have 

\begin{theorem}\label{Dis-h1}
Let $R$ be any transit function satisfying the axioms  $(b2), (b3), (J2), (J2')$ and $(J3')$ then $G_R$ is distance hereditary and $R$ coincides with the interval function $I$ of $G_R$.
\end{theorem}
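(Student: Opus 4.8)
The plan is to obtain this theorem almost entirely from Theorem~\ref{Dis-h}, which has already carried out the structural work. First I would apply Theorem~\ref{Dis-h} verbatim: since $R$ satisfies the axioms $(b2), (b3), (J2), (J2')$ and $(J3')$, that theorem yields that the underlying graph $G_R$ is $HHD3$-fan-free, hence distance hereditary by Theorem~\ref{HHD3-fan}, and moreover that $R$ is precisely the induced path transit function $J_{G_R}$ of $G_R$. This immediately gives the first assertion of the present theorem, namely that $G_R$ is distance hereditary, so that it only remains to upgrade the identification $R = J_{G_R}$ to $R = I_{G_R}$.

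The second step rests on the classical characterization of distance hereditary graphs recorded just before the statement: a connected graph is distance hereditary if and only if every induced path is a shortest path. Granting this, for every pair $u, v \in V$ the set of vertices lying on some induced $u,v$-path in $G_R$ coincides with the set of vertices lying on some shortest $u,v$-path, that is, $J_{G_R}(u,v) = I_{G_R}(u,v)$.

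Chaining the two steps then gives $R(u,v) = J_{G_R}(u,v) = I_{G_R}(u,v)$ for all $u,v \in V$, which is exactly the claim that $R$ coincides with the interval function $I$ of $G_R$, completing the proof.

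The only content beyond invoking Theorem~\ref{Dis-h} is the equality $J_{G_R} = I_{G_R}$, and this is not so much an obstacle as a standard fact: it is the very property that justifies the name ``distance hereditary''. Were one to insist on a self-contained treatment, the work would be to show directly that in an $HHD3$-fan-free graph every induced path is geodesic; but since this equivalence belongs to the established theory of distance hereditary graphs, I would simply cite it rather than reprove it, and thereby keep the argument short.
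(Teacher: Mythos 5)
Your proposal is correct and follows essentially the same route as the paper: the paper likewise derives Theorem~\ref{Dis-h1} by applying Theorem~\ref{Dis-h} to get that $G_R$ is $HHD3$-fan-free (distance hereditary) with $R$ equal to the induced path transit function, and then invokes the standard fact that in a distance hereditary graph every induced path is a shortest path, so the induced path transit function coincides with the interval function. The only cosmetic difference is that the paper states this as a brief remark rather than a displayed proof, and additionally notes (as an aside) that an independent proof could be given along the lines of Theorem~\ref{Ptolemaic}.
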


Also note that since axiom $(J0)$ implies axiom $(J3')$ and $(J2')$, we can use the same ideas in the proof of Theorem~\ref{Ptolemaic} to prove  an independent proof for Theorem~\ref{Dis-h1}.
Finally we have the following theorem characterizing the interval function of a distance hereditary graph from Theorem~\ref{J2J3Prime} and Theorem~\ref{Dis-h1}

\begin{theorem}\label{DH_Ch}
Let $R$ be a transit function on the vertex set $V$ of a connected graph $G$. Then $G$ is a distance hereditary graph and $R$ coincides the interval function $I_G$ of $G$ if and only if $R$  satisfies the axioms  $(b2), (b3), (J2), (J2'), (J3')$ and the axiom $R(u,v) = \{u,v\}$ implies that $uv\in E(G)$. 
\end{theorem}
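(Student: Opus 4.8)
The plan is to prove Theorem~\ref{DH_Ch} by assembling the two directions from results already established in the paper, so that essentially no new combinatorial work is required. The theorem is a biconditional between a structural statement (``$G$ is distance hereditary and $R=I_G$'') and an axiomatic one (``$R$ satisfies $(b2),(b3),(J2),(J2'),(J3')$ together with the underlying-graph axiom $R(u,v)=\{u,v\}\Rightarrow uv\in E(G)$''). I would split the argument into the forward (necessity) and reverse (sufficiency) implications and treat each separately.

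For sufficiency, assume $R$ satisfies the five betweenness axioms plus the edge condition. The key is Theorem~\ref{Dis-h1}, which states precisely that a transit function satisfying $(b2),(b3),(J2),(J2'),(J3')$ has underlying graph $G_R$ distance hereditary and coincides with the interval function of $G_R$. Thus $R=I_{G_R}$. The remaining point is to identify $G_R$ with the given graph $G$: the supplementary axiom $R(u,v)=\{u,v\}\Rightarrow uv\in E(G)$, combined with the definition of the underlying graph (edges of $G_R$ are exactly the pairs with $R(u,v)=\{u,v\}$) and the fact that for the interval function of any connected graph adjacency forces $R(u,v)=\{u,v\}$, pins down $G_R\cong G$. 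Hence $G$ is distance hereditary and $R=I_G$.

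For necessity, suppose $G$ is distance hereditary and $R=I_G$. I would verify each axiom. Axioms $(J2')$ and $(J3')$ hold by Theorem~\ref{J2J3Prime}, since $G$ distance hereditary means $G$ is $HHD3$-fan-free and that theorem gives the equivalence with $I_G$ satisfying $(J2')$ and $(J3')$. Axiom $(J2)$ is noted in the introduction as always satisfied by the interval function. Axiom $(b2)$ is a standard property of $I_G$ (if $x\in I(u,v)$ and $y\in I(u,x)$ then $y\in I(u,v)$, immediate from additivity of distances along a geodesic). For $(b3)$ one uses the same distance additivity: $x\in I_G(u,v)$ and $y\in I_G(u,x)$ give $d(u,y)+d(y,x)=d(u,x)$ and $d(u,x)+d(x,v)=d(u,v)$, from which $d(y,x)+d(x,v)=d(y,v)$, i.e. $x\in I_G(y,v)$. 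Finally the edge condition holds because $I_G(u,v)=\{u,v\}$ exactly when $uv\in E(G)$ in any connected graph.

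The only genuinely nontrivial ingredient is $(b3)$ for a general distance hereditary graph, but this is a basic metric fact about the interval function and does not depend on distance heredity at all. The main obstacle, such as it is, lies in the bookkeeping of the identification $G_R\cong G$ in the sufficiency direction: one must be careful that the abstract underlying graph $G_R$ produced by Theorem~\ref{Dis-h1} is the same graph $G$ named in the statement, and this is exactly what the extra edge axiom is designed to guarantee. Since all the heavy combinatorial analysis (the forbidden-subgraph case work) was already carried out in Theorem~\ref{J2J3Prime} and Theorem~\ref{Dis-h1}, the present proof is essentially a citation-and-assembly argument.
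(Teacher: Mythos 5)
Your overall strategy is exactly the paper's: the paper gives no proof of Theorem~\ref{DH_Ch} beyond the one-line remark that it follows from Theorem~\ref{J2J3Prime} and Theorem~\ref{Dis-h1}, and your necessity direction (Theorem~\ref{J2J3Prime} for $(J2')$ and $(J3')$, the standard distance-additivity verifications of $(b2)$, $(b3)$, $(J2)$, and the edge property of $I_G$) correctly fills in what the paper leaves implicit.

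However, the step you yourself call the main obstacle --- pinning down $G_R\cong G$ in the sufficiency direction --- is not closed by your argument, and as you have set it up it cannot be. Theorem~\ref{Dis-h1} gives $R=I_{G_R}$, and the edge axiom read as stated ($R(u,v)=\{u,v\}\Rightarrow uv\in E(G)$) yields only the inclusion $E(G_R)\subseteq E(G)$. Your appeal to ``adjacency forces $R(u,v)=\{u,v\}$ for the interval function'' concerns adjacency in $G_R$ (the graph of which $R$ is an interval function), not adjacency in $G$, so it reproduces the same inclusion and never gives the reverse one $E(G)\subseteq E(G_R)$. Nothing in the hypotheses, taken literally, rules out $G_R$ being a proper connected spanning subgraph of $G$: take $G=K_3$ on $\{a,b,c\}$ and $R=I_T$ where $T$ is the spanning path $a\,b\,c$. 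Then $R$ satisfies $(b2),(b3),(J2)$ (properties of any interval function), satisfies $(J2')$ and $(J3')$ by Theorem~\ref{J2J3Prime} since $T$ is a tree and hence distance hereditary, and satisfies the one-way edge condition since $E(T)\subseteq E(G)$; yet $R(a,c)=\{a,b,c\}\neq\{a,c\}=I_G(a,c)$, so $R\neq I_G$. Thus the ``if'' direction is false under the literal one-way reading, and your proof breaks exactly at the claimed identification. The repair --- which the paper's own statement silently needs as well --- is to read the final axiom as a biconditional, $R(u,v)=\{u,v\}\Leftrightarrow uv\in E(G)$, i.e., that $G$ is precisely the underlying graph $G_R$; with that reading both inclusions hold, $G_R=G$, and your citation-and-assembly argument goes through. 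A complete write-up should state this explicitly rather than assert that the one-way axiom alone ``pins down'' $G_R\cong G$.
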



\bibliographystyle{amsplain}
\bibliography{bridged}
\end{document}